\documentclass[11pt]{article}
\usepackage{amsmath,amsfonts,amssymb,amsthm,latexsym, epsfig, graphics,mathrsfs, mathpazo,dsfont,wrapfig,bbm, fixmath, bm,multirow}
\usepackage{booktabs} \usepackage{longtable} \usepackage{array} \usepackage{amsmath,amssymb}
\usepackage{courier}
\usepackage[margin=1in,vmargin=1in]{geometry}
\usepackage{color}

\usepackage[numbers]{natbib}
\newtheorem{theorem}{Theorem}
\newtheorem{lemma}{Lemma}
\newtheorem{proposition}{Proposition}

\newtheorem{corollary}{Corollary}

   \theoremstyle{plain}
    \numberwithin{equation}{section}

\newtheorem{remark}[theorem]{Remark}

\usepackage{xcolor}

\definecolor{Blue}{rgb}{0,0,1}

\definecolor{Green}{rgb}{0,1,0}

\definecolor{Magenta}{rgb}{1,0,1}

\begin{document}

\title{A Spread-Gated Hawkes-Flocking Model for Best Bid and Ask Dynamics, with an Application to Limit Order Placement}
\author{Hyoeun Lee\footnote{Department of Statistics, University of Illinois, Email: hyoeun@illinois.edu}
,\,\,Kiseop Lee\footnote{Department of Statistics, Purdue University, Email: kiseop@purdue.edu}
}
\maketitle

\begin{abstract}
	We study the joint dynamics of the best bid and ask prices with a spread-gated Hawkes-flocking model. The model tracks four types of best-quote movements: spread-narrowing movements are switched off when the spread is at its one-tick minimum, and a cross-side excitation term, whose activation depends on the prevailing spread, links the two sides of the book. We show that the process is non-explosive on every finite horizon, give an $O(N)$ recursive likelihood, and validate the maximum likelihood estimator by simulation. On real intraday limit order book data for two large-tick stocks, INTC and MSFT, the restriction that removes the cross-side term is rejected, and the full model improves fit substantially by AIC and BIC; the likelihood is multimodal on a single day, so estimation uses a multi-start search. As an application, we derive the closed-form optimal size of a single-period limit order placed at the best or second-best quote, given the model's next-event probabilities and externally supplied execution probabilities.
\end{abstract}

\section{Introduction}

Hawkes processes are a standard tool for modeling the clustering of events in high-frequency financial data; \citet{bacry2015hawkes} review their use in finance. In limit order book (LOB) modeling they have been used to describe market order flow and its market impact~\cite{alfonsi2016dynamic}, the arrivals of market orders, limit orders and cancellations in a full order book~\cite{abergel2015long,horst2019scaling}, and price variations~\cite{bacry2013modelling}.

A smaller literature models the best quotes or the bid--ask spread directly. \citet{zheng2014modelling} describe the movements of the best bid and best ask with a multivariate Hawkes process with constraints that keep the spread at least one tick wide. \citet{lee2023modeling} extend this line with a spread-dependent baseline and a stochastic reset of the excitation that keeps intensities nonnegative, and estimate the model daily on consolidated quotes of high-priced stocks. \citet{ruan2023selfexciting} model spread jumps of several sizes, multiplying the intensity by a function of the current spread. Other Hawkes-type models let the dynamics depend on the state of the book more generally. In the state-dependent Hawkes process of \citet{morariupatrichi2022state}, applied to Nasdaq order-flow data, the kernel depends on the state (such as the spread or the queue imbalance) at the time of the exciting event. \citet{sfendourakis2020lob} multiply a Hawkes component by a factor of the current state and, for large-tick stocks, use an indicator of a one-tick spread. \citet{wu2020single} and \citet{protter2024order} let baseline intensities depend on queue sizes and liquidity states, \citet{jain2024limit} combine compound Hawkes processes for order sizes with spread-dependent in-spread intensities, and \citet{kirchner2022hawkes} model market orders, limit orders and cancellations with a marked, self- and cross-exciting Hawkes process whose baseline intensity depends on the order-book imbalance. Selecting the nonzero excitation terms and kernel shapes nonparametrically, they find that imbalance predicts the side of the next market order in an essentially linear way. That the intensity of market orders rises when the spread falls to one tick is an empirical regularity of large-tick stocks~\cite{munitoke2017modelling,sfendourakis2020lob}; \citet{jain2025tick} study the role of tick size in order book dynamics more generally.

This paper builds on the constrained Hawkes framework of \citet{zheng2014modelling} and the flocking specification of \citet{jang2020systemic}. We model four types of best-quote movements (the best ask moving up or down, and the best bid moving up or down) with a multivariate Hawkes intensity that has three ingredients: a self- and mutually-exciting kernel $\Phi$ within each side of the book, a cross-side kernel $\Psi$ whose entries are activated according to the current spread, and an outer gate that sets the intensities of spread-narrowing movements to zero when the spread is at its one-tick minimum. Our model differs from these in two ways. First, the hard gate that removes narrowing movements at one tick is combined with a cross-side kernel whose activation, row by row, depends on the current spread. Second, all kernels are nonnegative and the state enters only through a $0$--$1$ gate, so the intensity is dominated by that of a linear Hawkes process. We use this to show non-explosion for the constrained process itself (Section~2.3).

We make three contributions. First, a model and a non-explosion result for it (Section~2). Second, an $O(N)$ recursive likelihood, validated by simulation, and an application to LOBSTER data for two large-tick stocks (Section~3.3), where the spread sits at one tick more than $99\%$ of the time. We test formally whether the cross-side term is needed. The likelihood is multimodal on a single day, so a multi-start search is needed; the observed information at the best fit is well conditioned once parameters are scaled (Appendix~B). We focus on large-tick stocks because there the spread is essentially a two-state process, so a binary gate is the relevant state dependence. Third, a deliberately modest trading application: a single-period, mean--variance limit-order placement problem, for which the model supplies next-event probabilities and which has a closed-form optimal order size (Section~4).

Hawkes processes have also been used in trading problems. \citet{cartea2014modelling} link mutually exciting market-order flow to short-term alpha and adverse selection, \citet{jusselin2021optimal} studies market making under persistent order flow, and \citet{choi2021optimal} consider market making when buy and sell arrivals are synchronized. For limit-order placement, \citet{guo2013optimal} derive optimal static and dynamic placement in a correlated random walk model of the best quotes, \citet{guilbaud2013optimal} treat optimal high-frequency trading with limit and market orders, \citet{lehalle2017limit} analyze placement with adverse selection and latency, and \citet{figueroalopez2018optimal} model queue positions. Our application is simpler than these: it isolates one placement decision, and treats the execution probabilities as inputs.

The paper is organized as follows. In Section~2, we define the model, relate it to existing models, and show non-explosion. In Section~3, we describe estimation, the simulation validation, and the application to real data. In Section~4, we develop the single-period placement problem. We conclude in Section~5, and the appendices contain additional proofs and real-data estimation results.

\newpage

\subsection{Notation Summary} 

Tables~\ref{tab:state_variables}--\ref{tab:assumptions} summarize the notation used throughout the paper.
\begin{table}[ht] 
\centering 
\caption{State Variables} 
\begin{tabular}{ll} 
\toprule Symbol & Description \\ 
\midrule $A(t)$ & Best ask price at time $t$ \\ 
$B(t)$ & Best bid price at time $t$ \\ 
$S(t)=A(t)-B(t)$ & Bid--ask spread \\ 
$\delta$ & Tick size (minimum spread level) \\ 
$X(t)$ & Cash position \\ $Y(t)$ & Inventory position \\ 
$G(t)$ & Mark-to-market wealth \\
& $G(t)=X(t)+\frac{A(t)+B(t)}{2}Y(t)$ \\ 
\bottomrule 
\end{tabular} \label{tab:state_variables} 
\end{table} 

\begin{table}[ht] 
\centering 
\caption{Counting Processes for Price Movements} 
\begin{tabular}{ll} 
\toprule Symbol & Description \\ 
\midrule $N_A^u(t)$ & Number of upward movements of the best ask price \\ $N_A^d(t)$ & Number of downward movements of the best ask price \\ $N_B^u(t)$ & Number of upward movements of the best bid price \\ 
$N_B^d(t)$ & Number of downward movements of the best bid price \\ $
N(t)$ & $\bigl( N_A^u(t), N_A^d(t), N_B^u(t), N_B^d(t) \bigr)^\top$ \\ \bottomrule 
\end{tabular} \label{tab:counting_processes} 
\end{table} 

\begin{table}[ht] 
\centering \caption{Conditional Intensities} 
\begin{tabular}{ll} \toprule Symbol & Description \\ 
\midrule $\lambda_A^u(t)$ & Intensity of an upward ask-price movement \\ $\lambda_A^d(t)$ & Intensity of a downward ask-price movement \\ $\lambda_B^u(t)$ & Intensity of an upward bid-price movement \\ $\lambda_B^d(t)$ & Intensity of a downward bid-price movement \\ $\lambda(t)$ & $\bigl( \lambda_A^u(t), \lambda_A^d(t), \lambda_B^u(t), \lambda_B^d(t) \bigr)^\top$ \\ 
\bottomrule 
\end{tabular} \label{tab:intensities} 
\end{table} 

\begin{table}[ht] 
\centering \caption{Parameters of the Hawkes-Flocking Model} \begin{tabular}{ll} \toprule Symbol & Description \\ \midrule $\mu_i$ & Baseline intensity of component $i$ \\ $\alpha_{ij}$ & Excitation magnitude parameter \\ $\beta_{i}$ & Exponential decay parameter \\ $\Phi(t)$ & Kernel matrix \\ $\Psi$ & Flocking interaction matrix \\ $k$ & Base excitation matrix \\  $\rho(\cdot)$ & Spectral radius \\ \bottomrule \end{tabular} \label{tab:hawkes_parameters} 
\end{table} 

\begin{table}[ht] \centering \caption{Trading Actions and Execution Variables} \begin{tabular}{ll} \toprule Symbol & Description \\ \midrule $\pi$ & Trading action \\ $l^a$ & Limit sell order size \\ $l^b$ & Limit buy order size \\ $r$ & Exchange rebate \\ $\tau$ & Time of the first subsequent price movement \\ $E$ & Event that the submitted order is executed \\ $E^c$ & Complement of $E$ \\ \bottomrule \end{tabular} \label{tab:trading_controls} 
\end{table} 

\begin{table}[ht] \centering \caption{Standing Assumptions} \begin{tabular}{ll} \toprule Assumption & Description \\ \midrule (A1) & $\mu_i>0$ for all components \\ (A2) & $\Phi(t)\ge 0$ and $\Psi(t)\ge 0$ entrywise, for all $t\ge0$ \\ (A3) & $\int_0^\infty \Phi(s)\,ds < \infty$ and $\int_0^\infty \Psi(s)\,ds < \infty$, entrywise \\   \bottomrule \end{tabular} \label{tab:assumptions} \end{table}

\newpage
\section{Model Buildup: Hawkes-Flocking Limit Order Book}

\subsection{Limit Order Book Dynamics}

We model the dynamics of the best ask and best bid prices using a
multivariate point process framework. Let $A(t)$ and $B(t)$ denote the
best ask price and best bid price at time $t$, respectively.

The arrivals of bid and ask price movements are described by four
counting processes,

\begin{equation}
\mathbb{N}_t
=
\begin{bmatrix}
N_A^u(t) \\
N_A^d(t) \\
N_B^u(t) \\
N_B^d(t)
\end{bmatrix},
\end{equation}

where

\begin{itemize}
\item $N_A^u(t)$ counts upward movements of the best ask price;
\item $N_A^d(t)$ counts downward movements of the best ask price;
\item $N_B^u(t)$ counts upward movements of the best bid price;
\item $N_B^d(t)$ counts downward movements of the best bid price.
\end{itemize}

Assuming that prices move in multiples of a fixed tick size $\delta$,
the best ask and best bid prices can be represented as

\begin{equation}
A(t)
=
A(0)
+
\delta
\Bigl(
N_A^u(t)-N_A^d(t)
\Bigr),
\end{equation}

and

\begin{equation}
B(t)
=
B(0)
+
\delta
\Bigl(
N_B^u(t)-N_B^d(t)
\Bigr).
\end{equation}

The bid--ask spread is defined by

\begin{equation}
S(t)=A(t)-B(t).
\end{equation}

Substituting the price dynamics into the definition of the spread gives

\begin{equation}
S(t)
=
S(0)
+
\delta\Bigl(
N_A^u(t)
+
N_B^d(t)
-
N_A^d(t)
-
N_B^u(t)
\Bigr).
\end{equation}

The spread therefore evolves through two distinct classes of events.
The events $A^u$ and $B^d$ widen the spread, whereas the events $A^d$
and $B^u$ narrow the spread.

To emphasize this distinction, define the spread-widening and
spread-narrowing counting processes by

\begin{align}
N_S^u(t)
&=
N_A^u(t)+N_B^d(t),
\\
N_S^d(t)
&=
N_A^d(t)+N_B^u(t).
\end{align}

Then the spread process may be written as

\begin{equation}
S(t)
=
S(0)
+
\delta
\Bigl(
N_S^u(t)-N_S^d(t)
\Bigr).
\end{equation}

The limit order book imposes a natural constraint on the spread.
Because the best ask price must always remain above the best bid price,
and prices are quoted on a discrete tick grid, the spread can never fall
below one tick. Accordingly, the admissible state space is

\begin{equation}
S(t)\ge \delta.
\end{equation}

This constraint plays an important role in the subsequent model construction. Whenever the spread reaches its minimum level, spread-narrowing price movements must be suppressed to preserve
the ordering of the best bid and ask prices.

The spread process also serves as the key state variable in the Hawkes-flocking specification developed below. The activation of certain intensity components depends on whether the spread is equal to its minimum value $\delta$ or strictly greater than $\delta$.

\subsection{Hawkes-Flocking Structure}

We now specify the intensity process governing the arrivals of bid and
ask price movements.

Let

\begin{equation}
\bm{\lambda}_t
=
\begin{bmatrix}
\lambda_A^u(t)
\\
\lambda_A^d(t)
\\
\lambda_B^u(t)
\\
\lambda_B^d(t)
\end{bmatrix}
\end{equation}

denote the vector of conditional intensities associated with the
counting process $\mathbb N_t$.

The intensity process is defined by

\begin{equation}
\bm{\lambda}_t
=
\left[
\bm{\mu}
+
\int_{-\infty}^{t}
\bm{h}(t-u)\,
d\mathbb N_u
\right]
\circ
\begin{bmatrix}
1
\\
\mathcal I(S(t)>\delta)
\\
\mathcal I(S(t)>\delta)
\\
1
\end{bmatrix},
\label{eq:intensity}
\end{equation}

where

\begin{equation}
\bm{\mu}
=
\begin{bmatrix}
\mu_1
\\
\mu_1
\\
\mu_2
\\
\mu_2
\end{bmatrix}
\end{equation}

is the vector of baseline intensities.

The indicator structure in Equation~(\ref{eq:intensity}) imposes the spread constraint introduced
in the previous subsection. Whenever the spread reaches its minimum
level $\delta$, the spread-narrowing intensities
$\lambda_A^d(t)$ and $\lambda_B^u(t)$ are forced to be zero.
Consequently, no further spread-narrowing movements can occur when
$S(t)=\delta$, ensuring that the spread never falls below one tick.

The kernel matrix is decomposed into a Hawkes component and a flocking
component,

\begin{equation}
\bm{h}(t-u)
=
\Phi(t-u)
+
k(t)\circ\Psi(t-u).
\label{eq:hawkesflockingkernel}
\end{equation}

The matrix $\Phi$ captures self-excitation and mutual-excitation within
the ask and bid price processes,

\begin{equation}
\Phi(t)
=
\begin{bmatrix}
\alpha_{1s}e^{-\beta_1 t}
&
\alpha_{1c}e^{-\beta_1 t}
&
0
&
0
\\
\alpha_{1c}e^{-\beta_1 t}
&
\alpha_{1s}e^{-\beta_1 t}
&
0
&
0
\\
0
&
0
&
\alpha_{2s}e^{-\beta_2 t}
&
\alpha_{2c}e^{-\beta_2 t}
\\
0
&
0
&
\alpha_{2c}e^{-\beta_2 t}
&
\alpha_{2s}e^{-\beta_2 t}
\end{bmatrix}.
\label{eq:phi}
\end{equation}

The parameters $\alpha_{is}$ describe self-excitation, while the
parameters $\alpha_{ic}$ describe mutual-excitation. The decay
parameters $\beta_i$ determine the speed at which the excitation
returns toward the baseline intensity level.

To account for cross-side excitation that depends on the prevailing spread, we introduce a flocking component (the name follows \citet{jang2020systemic}). The activation matrix is defined by

\begin{equation}
k(t)
=
\begin{bmatrix}
\mathbbm{1}_{\{S(t)=\delta\}}
&
\mathbbm{1}_{\{S(t)=\delta\}}
&
\mathbbm{1}_{\{S(t)=\delta\}}
&
\mathbbm{1}_{\{S(t)=\delta\}}
\\
\mathbbm{1}_{\{S(t)>\delta\}}
&
\mathbbm{1}_{\{S(t)>\delta\}}
&
\mathbbm{1}_{\{S(t)>\delta\}}
&
\mathbbm{1}_{\{S(t)>\delta\}}
\\
\mathbbm{1}_{\{S(t)>\delta\}}
&
\mathbbm{1}_{\{S(t)>\delta\}}
&
\mathbbm{1}_{\{S(t)>\delta\}}
&
\mathbbm{1}_{\{S(t)>\delta\}}
\\
\mathbbm{1}_{\{S(t)=\delta\}}
&
\mathbbm{1}_{\{S(t)=\delta\}}
&
\mathbbm{1}_{\{S(t)=\delta\}}
&
\mathbbm{1}_{\{S(t)=\delta\}}
\end{bmatrix}.
\label{eq:k}
\end{equation}

The flocking kernel is given by

\begin{equation}
\Psi(t)
=
\begin{bmatrix}
0
&
0
&
\alpha_{1n}e^{-\beta_1 t}
&
\alpha_{1w}e^{-\beta_1 t}
\\
0
&
0
&
\alpha_{1n}e^{-\beta_1 t}
&
\alpha_{1w}e^{-\beta_1 t}
\\
\alpha_{2w}e^{-\beta_2 t}
&
\alpha_{2n}e^{-\beta_2 t}
&
0
&
0
\\
\alpha_{2w}e^{-\beta_2 t}
&
\alpha_{2n}e^{-\beta_2 t}
&
0
&
0
\end{bmatrix}.
\label{eq:psi}
\end{equation}

The parameters $\alpha_{iw}$ and $\alpha_{in}$ capture the flocking
effects associated with spread-widening and spread-narrowing states,
respectively.

The flocking component differs from the standard Hawkes kernel. The Hawkes kernel models excitation generated by previous movements on the same side of the book. The flocking kernel links the two sides: movements on one side excite movements on the other, and the current spread determines which of these interactions are active. The name is inherited from \citet{jang2020systemic}; the dependence is event-triggered, unlike the persistent synchronization between buy and sell arrivals modeled by \citet{choi2021optimal}. In the estimates of Section~3.3, its fitted role is a rapid, millisecond-scale re-narrowing of the spread after a widening event. The matrix $k(t)$ determines when each type of flocking interaction becomes active.

In sum, the proposed specification combines endogenous clustering
through Hawkes excitation with state-dependent interactions generated by
the spread process. This additional source of dependence plays an
important role in both estimation and the trading framework developed in
subsequent sections.

\begin{remark}[Comparison with the symmetric flocking specification]
In the two-price flocking model of \citet{jang2020systemic}, the roles of $\alpha_{iw}$ and $\alpha_{in}$
are swapped between the `upward' and `downward' rows of the flocking kernel. This is necessary
because their two prices $C_1$ and $C_2$ are symmetric and interchangeable, so whether an event widens
or narrows the price difference depends on the current regime (i.e., whether $C_1<C_2$ or $C_1>C_2$).
In our setting, by contrast, $A(t)$ and $B(t)$ play fixed, asymmetric roles as the best ask and best bid
prices, and $S(t)=A(t)-B(t)$ is a signed quantity with a fixed orientation. Consequently, whether an
event widens or narrows the spread is an absolute property of the event itself, not something that depends on the current regime: $A^u$ and $B^d$ always widen the spread, while $A^d$ and $B^u$ always narrow it. This is why the flocking coefficients $\alpha_{iw}, \alpha_{in}$ in
$\Psi(t)$ do not need to swap between rows: each column's coefficient reflects the fixed
widening/narrowing identity of its triggering event, so the row pairs $(A^u, A^d)$ and $(B^u, B^d)$
share identical flocking responses to a given triggering event, with the regime-dependence handled
entirely by the activation matrix $k(t)$.
\end{remark}

\subsection{Relation to Existing Models and Non-explosion}

The proposed Hawkes-flocking limit order book model combines three
components: a multivariate Hawkes process, a constrained limit order
book structure, and a state-dependent flocking mechanism.

When the flocking kernel is removed, that is,

\begin{equation}
\Psi(t)\equiv 0,
\end{equation}

the model reduces to a constrained Hawkes model for bid and ask price
dynamics. In this case, all dependence among future price movements is
generated through the self-exciting and mutually-exciting structure
contained in the Hawkes kernel $\Phi$.

If the spread constraint is removed, the model has a Hawkes-flocking structure with state-dependent interactions. We do not claim that it coincides with any particular published Hawkes-flocking specification; the present framework builds on, and adapts, both constrained Hawkes models for limit order books and Hawkes-flocking models for interacting price processes.

The two kernels play different roles. The kernel $\Phi$ captures
dependence generated by previous bid and ask price movements, whereas
the kernel $\Psi$ captures cross-side excitation whose activation depends on the spread. The activation matrix $k(t)$ determines whether a
particular flocking interaction becomes active according to the current
state of the spread process.

Several other Hawkes-type models let the dynamics depend on the state of the book, and they differ in how the state enters. In the state-dependent Hawkes process of \citet{morariupatrichi2022state}, the kernel depends on the state at the time of the exciting event, whereas here the activation matrix is evaluated at the current time and switches the contribution of all past events on or off. \citet{sfendourakis2020lob} and \citet{ruan2023selfexciting} multiply the intensity by a function of the current state: \citet{sfendourakis2020lob} use smooth exponential factors, including a one-tick indicator covariate for large-tick stocks, while in \citet{ruan2023selfexciting} the factor for downward spread jumps is zero at one tick, as in the indicator of \citet{zheng2014modelling}. The models also differ in how intensities are kept valid. \citet{lee2023modeling} enforce a nonnegative intensity through a stochastic reset of the excitation and a spread-dependent baseline, and \citet{jain2024limit} floor inhibitory kernels at zero and let the in-spread intensity vanish at one tick. Here the intensity is a nonnegative combination of nonnegative kernels and a $0$--$1$ gate, which is what allows the domination argument below.

To study the non-explosion of the model, we first observe that every
entry of the activation matrix satisfies

\begin{equation}
k_{ij}(t)\in\{0,1\}.
\end{equation}

Furthermore, by construction,

\begin{equation}
\alpha_{iw}\ge0,
\qquad
\alpha_{in}\ge0,
\qquad
i=1,2,
\end{equation}

which implies that every entry of the flocking kernel $\Psi(t)$ is
nonnegative.

Therefore,

\begin{equation}
k(t)\circ\Psi(t)
\le
\Psi(t)
\end{equation}

componentwise. Consequently,

\begin{equation}
h(t)
=
\Phi(t)
+
k(t)\circ\Psi(t)
\le
\Phi(t)
+
\Psi(t).
\end{equation}

Define the dominating kernel

\begin{equation}
\bar h(t)
=
\Phi(t)
+
\Psi(t),
\end{equation}

and define the associated branching matrix

\begin{equation}
\bar M
=
\int_0^\infty
\bar h(u)\,du.
\end{equation}

The entries of $\bar M$ represent the average numbers of offspring
events generated by a single event through both Hawkes excitation and
flocking excitation. The spectral radius of $\bar M$, denoted by $\rho(\bar M)$, is defined as the largest absolute value among the eigenvalues of $\bar M$. From the branching-process interpretation of Hawkes processes, $\rho(\bar M)$ summarizes the overall degree of endogenous amplification present in the system. It plays no role in the non-explosion result below: for $\rho(\bar M)<1$ the dominating process is subcritical, with finite mean intensity, while for $\rho(\bar M)\ge1$ it is supercritical but still non-explosive on every finite horizon.

\begin{lemma}[Monotone thinning against a dominating Hawkes process]\label{lem:coupling}
Let $\bar{\mathbb N}$ be a multivariate Hawkes process with baseline $\bm\mu\ge0$ and a
time-invariant, entrywise-nonnegative, bounded, entrywise-integrable kernel $\bar h(\cdot)$, with $\bar M=\int_0^\infty \bar h(u)\,du$. Existence of a version of $\bar{\mathbb N}$ and its almost-sure non-explosion on every finite horizon follow from \citet[Theorem~2.4(i)]{morariupatrichi2022state} applied with a one-point state space (for $\rho(\bar M)<1$ they also follow from \citet{bremaud1996stability}).

Let $\bm g:[0,\infty)\to[0,1]^4$ be any predictable, entrywise $[0,1]$-valued process, and let
$h(t-u)$ be any entrywise-nonnegative, predictable kernel satisfying
\[
h(t-u)\;\le\;\bar h(t-u)\qquad\text{entrywise, for all }t\ge u,\text{ a.s.}
\]
Then there exists a probability space carrying $\bar{\mathbb N}$ together with a point process
$\mathbb N$ having $\mathcal F_t$-intensity
\[
\bm\lambda_t
=
\Bigl[
\bm\mu+\int_{-\infty}^{t} h(t-u)\,d\mathbb N_u
\Bigr]
\circ
\bm g(t),
\]
such that, almost surely,
\[
\mathbb N_t \;\le\; \bar{\mathbb N}_t \qquad\text{entrywise, for every } t\ge0,
\]
in the sense that $\mathbb N([0,t])\le \bar{\mathbb N}([0,t])$ componentwise as counting measures.
In particular, $\mathbb N$ exists and is almost surely finite on every finite time horizon.
\end{lemma}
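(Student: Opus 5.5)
We construct both processes by Poisson thinning on a common probability space. Take four independent Poisson random measures $Q_i(dt,dz)$ on $[0,\infty)\times[0,\infty)$ with Lebesgue intensity $dt\,dz$, and let $\mathcal F_t$ be the filtration they generate, enlarged by whatever information makes $\bm g$ and $h$ predictable. Define the dominating process as the solution of
\[
\bar N_i(dt)=Q_i\bigl(dt\times[0,\bar\lambda_i(t)]\bigr),\qquad \bar\lambda_i(t)=\mu_i+\sum_j\int_{-\infty}^{t-}\bar h_{ij}(t-u)\,\bar N_j(du).
\]
By the cited existence result (Theorem~2.4(i) of \citet{morariupatrichi2022state} with a one-point state space), this thinning equation has a solution that is a.s.\ non-explosive on every finite horizon. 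We take the pre-$0$ history of $\mathbb N$ to be dominated by that of $\bar{\mathbb N}$; the simplest choice is the empty history for both. We then define $\mathbb N$ by thinning the \emph{same} Poisson measures:
\[
N_i(dt)=Q_i\bigl(dt\times[0,\lambda_i(t)]\bigr),\qquad \lambda_i(t)=\Bigl[\mu_i+\sum_j\int_{-\infty}^{t-}h_{ij}(t-u)\,N_j(du)\Bigr]g_i(t).
\]

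The key step is to build $\mathbb N$ pathwise and show inductively that $\mathbb N\le\bar{\mathbb N}$ and $\lambda\le\bar\lambda$. Fix $T$ and $\omega$ such that $\bar{\mathbb N}$ has finitely many points on $[0,T]$, which holds a.s.\ by non-explosion. Let the atoms of $\bar{\mathbb N}$ be $\bar T_1<\bar T_2<\dots$; a.s.\ no two components jump simultaneously. Set $\mathbb N$ empty before $\bar T_1$. Suppose that on $[0,\bar T_n)$ the process $\mathbb N$ is defined and is a subset of the atoms of $\bar{\mathbb N}$ (as marked points). Then for $t\in(\bar T_{n-1},\bar T_n]$, nonnegativity together with $h\le\bar h$, $g\le1$ and $N\le\bar N$ gives
\[
\lambda_i(t)\le \mu_i+\sum_j\int h_{ij}\,dN_j\le \mu_i+\sum_j\int\bar h_{ij}\,d\bar N_j=\bar\lambda_i(t).
\]
Hence every atom of $Q_i$ lying under the graph of $\lambda_i$ on this interval also lies under the graph of $\bar\lambda_i$. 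The only such candidates are therefore the atom at $\bar T_n$ itself, which we accept into $\mathbb N$ iff its $z$-coordinate is at most $\lambda_i(\bar T_n)$. This defines $\mathbb N$ up to $\bar T_n$ and preserves the inclusion. Because the atoms of $\bar{\mathbb N}$ are locally finite, the induction exhausts $[0,T]$ and yields $\mathbb N([0,t])\le\bar{\mathbb N}([0,t])$ componentwise. Since $T$ is arbitrary, this holds for all $t$.

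The last step is to verify that $\mathbb N$ has $\mathcal F_t$-intensity $\bm\lambda$. The process $\lambda_i(t)$ is predictable: it is built from $\mathbb N$ on $[0,t)$, the predictable kernel $h$ and the predictable gate $\bm g$, and $\mathbb N$ is adapted by construction. For predictable $\lambda$, the standard thinning property of a Poisson random measure gives that $\int_0^t\int \mathbb 1\{z\le\lambda_i(s)\}\,(Q_i(ds,dz)-ds\,dz)$ is a local martingale, with localization via the first time $\bar N$ exceeds $m$. So $N_i(t)-\int_0^t\lambda_i(s)\,ds$ is a local martingale, as required. Finiteness on finite horizons is then immediate from the domination.

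I expect the main obstacle to be this measurability and predictability bookkeeping rather than the comparison itself. In particular, we must ensure that the general predictable kernel $h(t-u)$ (which may depend on $\omega$ and $t$) together with $\bm g$ are predictable with respect to a filtration under which the $Q_i$ remain Poisson with the same compensator. If $h$ and $\bm g$ depend on $\mathbb N$ itself, as in the application where they are functions of $S(t-)$, we will handle this by noting that they are determined by $\mathbb N$ on $[0,t)$ at each inductive step. This makes the recursive construction well defined, and predictability holds for the natural filtration of $(Q_i)$.
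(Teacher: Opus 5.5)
Your proposal is correct and is essentially the paper's argument: the paper also processes the atoms of $\bar{\mathbb N}$ in time order, proves $\lambda\le\bar\lambda$ inductively from $h\le\bar h$, $\bm g\le 1$ and $\mathbb N\le\bar{\mathbb N}$, and accepts the $n$-th atom iff an independent uniform satisfies $U_n\le\lambda^{(i_n)}_{t_n}/\bar\lambda^{(i_n)}_{t_n}$. That is your rule $z_n\le\lambda_{i_n}(\bar T_n)$ with $U_n=z_n/\bar\lambda_{i_n}(\bar T_n)$. The only difference is packaging: your common Poisson embedding needs $\bar{\mathbb N}$ realized as a strong solution on the Poisson space, and in return it gives the intensity claim directly from the compensator $ds\,dz$, localized using the boundedness of $\bar h$; the paper instead starts from an arbitrary version of $\bar{\mathbb N}$ with independent marks and cites the standard Ogata/Daley--Vere-Jones thinning argument for that step.
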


\begin{proof}
List the points of $\bar{\mathbb N}$ in increasing time order as $(t_n,i_n)_{n\ge1}$, where
$i_n\in\{A^u,A^d,B^u,B^d\}$ denotes the type of the $n$-th arrival. Let $(U_n)_{n\ge1}$ be an
i.i.d.\ sequence of $\mathrm{Uniform}(0,1)$ random variables, independent of $\bar{\mathbb N}$,
and let $\mathcal F_t$ be the filtration generated by $\bar{\mathbb N}$, $\mathbb N$, and the
marks $(U_n)$ up to time $t$.

We construct $\mathbb N$ by thinning the points of $\bar{\mathbb N}$, processed in time order,
and show inductively that $\mathbb N_s\le\bar{\mathbb N}_s$ entrywise for all $s<t_n$ at each
step $n$. The base case $\mathbb N_0=\bar{\mathbb N}_0=0$ is immediate.

\emph{Inductive step.} Suppose $\mathbb N_s\le\bar{\mathbb N}_s$ entrywise for all $s<t_n$.
Define
\[
\lambda_{t_n}^{(i_n)}
=
\Bigl[
\mu_{i_n}+\int_{-\infty}^{t_n} h_{i_n,\cdot}(t_n-u)\,d\mathbb N_u
\Bigr]
g_{i_n}(t_n),
\qquad
\bar\lambda_{t_n}^{(i_n)}
=
\mu_{i_n}+\int_{-\infty}^{t_n} \bar h_{i_n,\cdot}(t_n-u)\,d\bar{\mathbb N}_u.
\]
Since $d\mathbb N_u\le d\bar{\mathbb N}_u$ as measures on $(-\infty,t_n)$ by the inductive
hypothesis, $h\le \bar h$ entrywise, and $g_{i_n}(t_n)\in[0,1]$, we obtain
\[
\lambda_{t_n}^{(i_n)}
\;\le\;
\mu_{i_n}+\int_{-\infty}^{t_n} \bar h_{i_n,\cdot}(t_n-u)\,d\bar{\mathbb N}_u
\;=\;
\bar\lambda_{t_n}^{(i_n)}.
\]
If $\bar\lambda_{t_n}^{(i_n)}=0$ then also $\lambda_{t_n}^{(i_n)}=0$ and we reject the candidate
point. Otherwise, accept the point into $\mathbb N$ (i.e., set
$\mathbb N_{t_n}=\mathbb N_{t_n^-}+e_{i_n}$) if
$U_n\le \lambda_{t_n}^{(i_n)}/\bar\lambda_{t_n}^{(i_n)}$, and reject it otherwise.

By construction $\mathbb N$ gains a point at time $t_n$ only if $\bar{\mathbb N}$ does, so
$\mathbb N_{t_n}\le\bar{\mathbb N}_{t_n}$ entrywise, completing the induction; since jumps occur
only at the $t_n$, $\mathbb N_t\le\bar{\mathbb N}_t$ for all $t\ge0$.

It remains to verify that the accepted process $\mathbb N$ has $\mathcal F_t$-intensity exactly
$\bm\lambda_t$ as claimed. This is the standard thinning representation of a point process with
stochastic intensity against a dominating point process using independent uniform marks (see,
e.g., \citep[Sec.~7.5]{daleyverejones2003}): since $(U_n)$ is independent
of $\bar{\mathbb N}$ and $\mathbb N$ is a measurable function of $\bar{\mathbb N}$ and $(U_n)$
alone, the acceptance step at each candidate point is a Bernoulli thinning with
$\mathcal F_{t_n^-}$-measurable acceptance probability $\lambda_{t_n}^{(i_n)}/\bar\lambda_{t_n}^{(i_n)}$,
which yields an accepted process with $\mathcal F_t$-intensity $\bm\lambda_t$ by the same argument
used to justify Ogata's thinning algorithm~\cite{ogata1981lewis}.

Finally, $\bar{\mathbb N}_t<\infty$ a.s.\ for every finite $t$ by non-explosion of $\bar{\mathbb N}$,
so $\mathbb N_t\le\bar{\mathbb N}_t<\infty$ a.s.\ as well.
\end{proof}

Lemma~\ref{lem:coupling} is stated for a general dominated kernel $h$ and gate $\bm g$, so
that it applies to any predictable state-dependent modification of a Hawkes intensity, not
only the specific spread-gated structure of Equation~(\ref{eq:intensity}). We now specialize
it to establish the non-explosion of the Hawkes-flocking limit order book model.

\begin{proposition}[Non-explosion of the Hawkes-Flocking Limit Order Book Model]\label{prop:wellposed}
Let the kernels $\Phi$ and $\Psi$ be as in Section~2.2, with nonnegative coefficients $\alpha$ and positive decay rates $\beta_i$. Then the multivariate Hawkes process associated with the dominating kernel $\bar h$ is non-explosive; no condition on $\rho(\bar M)$ is required.

Consequently, the constrained Hawkes-flocking limit order book process is also non-explosive. In particular, almost surely, only finitely many bid and ask price movements occur on every finite time interval.
\end{proposition}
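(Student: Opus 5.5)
The plan is to reduce everything to Lemma~\ref{lem:coupling}. The first task is to check that the dominating kernel $\bar h=\Phi+\Psi$ satisfies the lemma's hypotheses. Every entry of $\Phi$ and $\Psi$ has the form $\alpha e^{-\beta_i t}$ with $\alpha\ge0$ and $\beta_i>0$. Hence $\bar h$ is time-invariant, entrywise nonnegative, and bounded by $\max\alpha$. It is also entrywise integrable, with
\[
\bar M=\int_0^\infty \bar h(u)\,du ,
\]
whose entries are $\alpha_{ij}/\beta_i<\infty$. This is (A2)--(A3) specialized to the exponential family. The baseline $\bm\mu=(\mu_1,\mu_1,\mu_2,\mu_2)^\top$ is nonnegative by (A1). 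The existence statement built into Lemma~\ref{lem:coupling}, via \citet[Theorem~2.4(i)]{morariupatrichi2022state} with a one-point state space, then gives a version of $\bar{\mathbb N}$ that is almost surely non-explosive on every finite horizon. No spectral condition is needed, because that result only uses boundedness and local integrability of the kernel. This settles the first claim.

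As a sanity check that does not rely on the citation, one could also argue directly that the linear Hawkes process does not explode. Let $c=\max_{ij}\sup_t \bar h_{ij}(t)$. Since the kernels are bounded, the total intensity satisfies
\[
\bar\lambda^{\rm tot}_t\le \textstyle\sum\mu_i + 4c\,\bar N^{\rm tot}_{t-} .
\]
So $\bar N^{\rm tot}$ is dominated by a pure birth (Yule-type) process with linear birth rate $a+bn$, and such a process is non-explosive because $\sum_n 1/(a+bn)=\infty$. I would mention this only as a remark.

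For the second claim, I apply Lemma~\ref{lem:coupling} with the gate
\[
\bm g(t)=\bigl(1,\ \mathcal I(S(t^-)>\delta),\ \mathcal I(S(t^-)>\delta),\ 1\bigr)^\top
\]
and the kernel $h(t-u)=\Phi(t-u)+k(t)\circ\Psi(t-u)$. I need to confirm three things:
\begin{itemize}
\item[(i)] $\bm g$ is $[0,1]^4$-valued and predictable. It is, because $S$ is an adapted c\`adl\`ag function of $\mathbb N$, so evaluating the indicator at $t^-$ (equivalently, using its left-continuous version) gives a predictable process.
\item[(ii)] $k(t)$ is predictable for the same reason.
\item[(iii)] $0\le h\le\bar h$ entrywise. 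This was already shown above from $k_{ij}\in\{0,1\}$ and $\Psi\ge0$.
\end{itemize}
The lemma then produces $\mathbb N$ with intensity exactly Equation~(\ref{eq:intensity}) and $\mathbb N_t\le\bar{\mathbb N}_t$ componentwise. Finiteness of $\bar{\mathbb N}$ on $[0,T]$ then transfers to $\mathbb N$, so only finitely many quote movements occur on every finite interval.

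The main obstacle is a subtlety in step (i). The gate and the activation matrix depend on the constructed process $\mathbb N$ itself, through $S$, rather than on an exogenous process. The thinning in Lemma~\ref{lem:coupling} has to be read as a recursive construction: at each candidate time $t_n$, the acceptance probability uses $S(t_n^-)$, which is determined by the points already accepted before $t_n$. I would point out that the lemma's inductive proof already handles this, since its ratio is $\mathcal F_{t_n^-}$-measurable. I would also note that the hard gate keeps $S\ge\delta$, so the state stays in the admissible set and the recursion is well defined at every step.
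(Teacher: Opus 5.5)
Your proposal is correct and follows the paper's proof. You check that $\bar h=\Phi+\Psi$ is nonnegative, bounded and integrable, so that Lemma~\ref{lem:coupling} (with the cited existence result) gives a non-explosive $\bar{\mathbb N}$; you then apply the lemma with the predictable gate $\bm g$ built from $S(t^-)$ and the dominated kernel $h\le\bar h$ to obtain $\mathbb N_t\le\bar{\mathbb N}_t$. Your Yule-type remark ($\bar\lambda^{\mathrm{tot}}_t\le\sum_i\mu_i+4c\,\bar N^{\mathrm{tot}}_{t-}$) and your note that the self-referential gate is handled by the lemma's time-ordered recursive thinning are sound additions. The paper leaves the second implicit and relies on the citation for the first.
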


\begin{proof}
Recall $h(t)\le\bar h(t)$ entrywise, as established above. Every entry of $\bar h$ is a finite sum of terms $\alpha e^{-\beta t}$ with $\alpha\ge0$ and $\beta>0$, so $\bar h$ is nonnegative, bounded and integrable, and Lemma~\ref{lem:coupling} applies: there exists a multivariate Hawkes process $\bar{\mathbb N}$ with baseline $\bm\mu$ and kernel $\bar h$, non-explosive on every finite horizon (\citet[Theorem~2.4(i)]{morariupatrichi2022state}).

The intensity process of the constrained model, Equation~(\ref{eq:intensity}), is of the form
\[
\bm\lambda_t
=
\Bigl[
\bm\mu+\int_{-\infty}^t h(t-u)\,d\mathbb N_u
\Bigr]
\circ
\bm g(t),
\qquad
\bm g(t)=\bigl(1,\ \mathcal I(S(t)>\delta),\ \mathcal I(S(t)>\delta),\ 1\bigr)^\top,
\]
where $\bm g(t)$ is entrywise $\{0,1\}$-valued, hence $[0,1]$-valued, and predictable (it depends
only on $S(t^-)$, a function of the process's strict past). Since $h\le\bar h$ entrywise, Lemma~\ref{lem:coupling} applies with this $h$ and $\bm g$, and yields a version of the constrained
Hawkes-flocking process $\mathbb N$, coupled with $\bar{\mathbb N}$ on a common probability
space, satisfying
\[
\mathbb N_t\;\le\;\bar{\mathbb N}_t\qquad\text{entrywise, a.s., for every }t\ge0.
\]
This establishes the existence of $\mathbb N$. Since $\bar{\mathbb N}_t<\infty$ a.s.\ for
every finite $t$, so is $\mathbb N_t$; that is, almost surely, only finitely many bid and ask price
movements occur on every finite time interval, establishing non-explosion of the constrained
Hawkes-flocking limit order book process.
\end{proof}

The proposition establishes that the model is mathematically well
defined on finite horizons. This property is important for likelihood
based parameter estimation, simulation of sample paths, and the trading
framework developed in later sections. The existence of a non-explosive version also guarantees that the conditional intensity process
remains suitable for forecasting future bid and ask movements and for
evaluating trading decisions.

\begin{remark}[Non-explosion is not recurrence]
Proposition~\ref{prop:wellposed} is a finite-horizon statement. The spectral radius $\rho(\bar M)$ enters only through the growth of the dominating process: for $\rho(\bar M)<1$ it is subcritical, and otherwise its expected number of events grows exponentially in $T$ while remaining finite. The proposition does not establish recurrence of the spread or a stationary distribution of the joint spread--intensity system. Such properties have been proved for simple state-dependent spread models: \citet{ruan2023selfexciting} show ergodicity for one-tick jumps and a single exponential kernel, using an intensity for downward jumps that grows with the spread. By contrast, \citet{sfendourakis2020lob} describe the stability of Hawkes processes with a state-dependent factor as open. In our fitted models the total compensator over the estimation window implies more spread-widening than spread-narrowing events (Section~3.3), so whether the fitted spread process is recurrent is an empirical question that we do not settle.
\end{remark}

\section{Estimation of Parameters}

This section describes the estimation method used for the Hawkes-flocking model. The log-likelihood of the Hawkes model and its MLE theory have been investigated by {Ogata}~\cite{ogata1978asymptotic} and {Ozaki}~\cite{ozaki1979maximum}. Other estimation approaches include the conditional least-squares method of {Kirchner}~\cite{kirchner2017estimation} and the non-parametric method of {Bacry, Dayri, and Muzy}~\cite{bacry2012non}. We adopt the maximum likelihood approach, following the estimation method for the Hawkes-flocking model introduced by {Jang, Lee, and Lee}~\cite{jang2020systemic}. Maximum likelihood is a natural choice here: unlike the non-parametric approach, it targets the specific parametric family $(\mu_i,\beta_i,\alpha_{is},\alpha_{ic},\alpha_{in},\alpha_{iw})$ of Section~2, whose entries have direct economic interpretations (self-, mutual-, and flocking-excitation); and unlike conditional least-squares, it extends naturally to the state-dependent gating structure of the intensity process, Equation~\eqref{eq:intensity}, without modification.

Likelihood-based inference for a point process implicitly presumes that the process is well defined and non-explosive on the estimation horizon $[0,T]$. This is guaranteed by Proposition~\ref{prop:wellposed} for every parameter set in the model's domain, so the log-likelihood below is a well-defined, finite random variable. Asymptotic properties of the maximum likelihood estimator would additionally require stationarity and ergodicity, which we do not establish (see the remark after Proposition~\ref{prop:wellposed}). We therefore report asymptotic standard errors only as approximate, descriptive measures of precision, and base inference on the simulation study of Section~3.2 and on likelihood-ratio and information-criterion comparisons of relative fit. We report $\rho(\bar M)$, and regime-specific analogues, as descriptive measures of endogeneity (Section~3.3).

The log-likelihood function up to time $T$ is

\begin{equation}
\begin{aligned}
 L(\theta)=&\sum_{j=1}^{N_A^u (T)} \log \lambda_A^u (t_{A,j}^u) +\sum_{j=1}^{N_A^d (T)} \log \lambda_A^d (t_{A,j}^d) +\sum_{j=1}^{N_B^u (T)} \log \lambda_B^u (t_{B,j}^u) +\sum_{j=1}^{N_B^d (T)} \log \lambda_B^d (t_{B,j}^d) \\
&- \int_{0}^{T}  \lambda_A^u(s) + \lambda_A^d(s)+\lambda_B^u(s)+ \lambda_B^d(s) \, ds,
\end{aligned}
\label{eq:loglik}
\end{equation}

where $\bm\lambda_t$ is as defined in Equation~\eqref{eq:intensity} and $t_{i,j}^{k}$ denotes the associated event times. The parameter set $\theta=(\mu_i,\beta_i,\alpha_{is},\alpha_{ic},\alpha_{in},\alpha_{iw})$ for $i=1,2$ (12 parameters in total) is estimated by maximizing $L(\theta)$ numerically.

\subsection{Computational Remarks}

Direct evaluation of Equation~\eqref{eq:loglik} from its definition requires summing over all past event times at every point of evaluation, an $O(N^2)$ computation in the total number of events $N$. Because every entry of $\Phi$ and $\Psi$ in a given row $i$ shares the same decay rate $\beta_i$ (Equations~\eqref{eq:phi} and \eqref{eq:psi}), the kernel contribution to $\lambda_i$ can instead be tracked as a single exponentially-decaying accumulator that is updated once per event, in the same spirit as the recursive evaluation used for the classical Hawkes log-likelihood (see, e.g., {Ozaki}~\cite{ozaki1979maximum}). The one modification needed relative to the classical recursion is that the activation matrix $k(t)$ is evaluated at the \emph{current} time rather than at the past event time, so the accumulator's coefficient must be re-weighted whenever the spread regime ($S(t)=\delta$ versus $S(t)>\delta$) changes; since $S(t)$ is itself piecewise constant between events, this re-weighting occurs naturally at each event time and does not increase the asymptotic cost. The resulting evaluation is $O(N)$, which is used throughout the simulation study and empirical estimation below.

\subsection{Simulation Validation}

In this subsection, we verify that the maximum likelihood estimator recovers known parameters from data simulated under the model itself, following the same validation logic as {Jang, Lee, and Lee}~\cite{jang2020systemic}. Sample paths are simulated on $[0,T]$ using a multivariate extension of Ogata's modified thinning algorithm~\cite{ogata1981lewis}, applied to the intensity process of Equation~\eqref{eq:intensity} with a parameter set satisfying $\rho(\bar M)<1$, so that the dominating process is subcritical. For each of $R$ independent replications, $\theta$ is re-estimated by maximizing Equation~\eqref{eq:loglik}, and the bias, standard deviation, and root-mean-squared error of the resulting estimates, relative to the true values, are reported in Table~\ref{table:simvalidation} across three estimation horizons $T$.

\begin{table}[htbp]
\centering
\begin{tabular}{l | c | c | c | c}
$\theta$ & True & Bias & Std. & RMSE \\
\hline
\multicolumn{5}{c}{$T=500$ (334.6 avg.\ events/path)} \\
\hline
$\mu_1$ & 0.0800 & 0.0037 & 0.0157 & 0.0160 \\
$\mu_2$ & 0.0800 & -0.0004 & 0.0137 & 0.0136 \\
$\beta_1$ & 0.6000 & 0.0211 & 0.1381 & 0.1393 \\
$\beta_2$ & 1.2000 & 0.1106 & 0.5120 & 0.5226 \\
$\alpha_{1s}$ & 0.2400 & -0.0060 & 0.0583 & 0.0584 \\
$\alpha_{1c}$ & 0.0600 & 0.0018 & 0.0362 & 0.0361 \\
$\alpha_{1n}$ & 0.3000 & 0.0141 & 0.1005 & 0.1013 \\
$\alpha_{1w}$ & 0.1200 & -0.0031 & 0.0691 & 0.0690 \\
$\alpha_{2s}$ & 0.2400 & -0.0061 & 0.0996 & 0.0995 \\
$\alpha_{2c}$ & 0.0600 & 0.0104 & 0.0574 & 0.0582 \\
$\alpha_{2n}$ & 0.3000 & 0.0178 & 0.1340 & 0.1349 \\
$\alpha_{2w}$ & 0.1200 & 0.0159 & 0.0861 & 0.0873 \\
\hline
\multicolumn{5}{c}{$T=2000$ (1344.9 avg.\ events/path)} \\
\hline
$\mu_1$ & 0.0800 & 0.0001 & 0.0087 & 0.0087 \\
$\mu_2$ & 0.0800 & 0.0013 & 0.0066 & 0.0067 \\
$\beta_1$ & 0.6000 & 0.0084 & 0.0646 & 0.0650 \\
$\beta_2$ & 1.2000 & 0.0230 & 0.1793 & 0.1803 \\
$\alpha_{1s}$ & 0.2400 & 0.0018 & 0.0297 & 0.0297 \\
$\alpha_{1c}$ & 0.0600 & -0.0007 & 0.0194 & 0.0194 \\
$\alpha_{1n}$ & 0.3000 & 0.0039 & 0.0459 & 0.0459 \\
$\alpha_{1w}$ & 0.1200 & 0.0029 & 0.0327 & 0.0327 \\
$\alpha_{2s}$ & 0.2400 & -0.0016 & 0.0433 & 0.0432 \\
$\alpha_{2c}$ & 0.0600 & -0.0048 & 0.0261 & 0.0265 \\
$\alpha_{2n}$ & 0.3000 & 0.0098 & 0.0629 & 0.0635 \\
$\alpha_{2w}$ & 0.1200 & 0.0014 & 0.0408 & 0.0408 \\
\hline
\multicolumn{5}{c}{$T=10000$ (6734.7 avg.\ events/path)} \\
\hline
$\mu_1$ & 0.0800 & 0.0005 & 0.0036 & 0.0036 \\
$\mu_2$ & 0.0800 & 0.0006 & 0.0030 & 0.0030 \\
$\beta_1$ & 0.6000 & 0.0048 & 0.0268 & 0.0271 \\
$\beta_2$ & 1.2000 & 0.0059 & 0.0884 & 0.0884 \\
$\alpha_{1s}$ & 0.2400 & -0.0001 & 0.0121 & 0.0121 \\
$\alpha_{1c}$ & 0.0600 & 0.0007 & 0.0080 & 0.0080 \\
$\alpha_{1n}$ & 0.3000 & 0.0009 & 0.0229 & 0.0228 \\
$\alpha_{1w}$ & 0.1200 & 0.0012 & 0.0156 & 0.0156 \\
$\alpha_{2s}$ & 0.2400 & -0.0027 & 0.0221 & 0.0222 \\
$\alpha_{2c}$ & 0.0600 & -0.0009 & 0.0134 & 0.0134 \\
$\alpha_{2n}$ & 0.3000 & 0.0012 & 0.0280 & 0.0279 \\
$\alpha_{2w}$ & 0.1200 & 0.0009 & 0.0188 & 0.0188 \\
\end{tabular}
\caption{Simulation-based validation of the ML estimator across three estimation horizons: true parameter values (satisfying $\rho(\bar M)\approx0.886<1$) versus the bias, standard deviation, and root-mean-squared error of the ML estimate across $R=200$ independent sample paths simulated on $[0,T]$, for $T\in\{500,2000,10000\}$.}
\label{table:simvalidation}
\end{table}

The baseline intensities $\mu_i$, decay rate $\beta_1$, and self-exciting parameters $\alpha_{is}$ recover cleanly at every horizon, with standard deviations substantially smaller than the parameter values themselves already at $T=500$. The mutual-exciting and flocking parameters ($\alpha_{ic}$, $\alpha_{in}$, $\alpha_{iw}$) and $\beta_2$ show larger standard deviations at $T=500$, consistent with a known identifiability difficulty in multivariate Hawkes estimation. Separating a process's response to \emph{another} process's history (mutual excitation, flocking) from its own self-exciting dynamics is intrinsically harder than identifying self-excitation alone, particularly when the two effects are estimated jointly from a single sample path. {Jang, Lee, and Lee}~\cite{jang2020systemic} document a related phenomenon for the Hawkes-flocking specification, where estimates of $\alpha_s$ and $\alpha_c$ are shown to be affected by near-multicollinearity between the self-exciting and flocking components of the kernel.

Table~\ref{table:simvalidation} shows that this behavior is a finite-sample effect rather than a structural limitation of the estimator. As the horizon lengthens from $T=500$ to $T=10000$, the standard deviation of every parameter shrinks monotonically, at a rate consistent with the $\sqrt{T}$ convergence of standard point-process maximum likelihood asymptotics. The bias in $\beta_2$, the parameter most affected at $T=500$, falls from $0.1106$ ($\approx 9\%$ of its true value) to $0.0230$ at $T=2000$ and $0.0059$ at $T=10000$, becoming statistically indistinguishable from zero at the longer horizons. These results are in line with consistency of the estimator for the full parameter set, including the mutual-exciting and flocking components: the residual bias visible at short horizons appears attributable to sample size, not to the estimation procedure or the model specification, though a simulation study of this kind cannot establish consistency formally.

\subsection{Application to Real Data}\label{sec:realdata}

We apply the estimation procedure of Sections~3.1--3.2 to real intraday limit order book data, using the model to test whether the flocking mechanism $\Psi$ is empirically justified, and to check the fitted dynamics for goodness-of-fit.

We use LOBSTER level-5 limit order book data~\cite{huang2011lobster} for a single trading day (2012-06-21). The model's core assumption, that the best ask and bid move in single-tick increments with $\Psi$ gated by whether $S(t)=\delta$, fits large-tick stocks (where the spread is usually at its one-tick minimum) far better than high-priced, small-tick-relative-to-price stocks. We confirmed this on the data before restricting attention to INTC and MSFT: for AMZN, a high-priced stock in the same sample, the spread is at one tick only $0.16\%$ of the time over the 10:00--15:30 window (time-weighted median spread of $12$ ticks), whereas for INTC and MSFT it sits at its one-tick minimum over $99\%$ of the time (though in only about three-quarters of order-book updates, since brief two-tick excursions generate many updates), and ask and bid moves of more than one tick are rare (under $0.5\%$ of moves for either stock). Best ask/bid states are collapsed across tied timestamps and classified into the four event types $A^u,A^d,B^u,B^d$; simultaneous ask-and-bid moves and multi-tick jumps, both well under $1\%$ of transitions for these two stocks, are excluded from the event series, and the sample is restricted to 10:00--15:30 to avoid open/close seasonality (the window used by \citet{lee2023modeling}; \citet{morariupatrichi2022state} and \citet{sfendourakis2020lob} use narrower mid-day windows, and a robustness check on a narrower window is left for future work). This yields $1{,}768$ events for INTC and $2{,}600$ for MSFT, with $\delta=\$0.01$ matching the model's tick size.

For each stock, $\theta$ is re-estimated by maximizing Equation~\eqref{eq:loglik} using a multi-start protocol (log-uniform random initializations, with a derivative-free fallback when the gradient-based optimizer's line search fails to converge), since we found the likelihood surface for this specification to be sensitive to starting values on real data at this timescale. Table~\ref{table:realdata-estimates} reports the best fit found for each stock. Difficulties of this kind are not specific to our data: \citet{lee2023modeling} report that, for large true decay rates, the success of the optimizer depends on the starting value and improves with sample size (they simulate 5{,}000 to 10{,}000 events; our series contain $1{,}768$ and $2{,}600$).

\begin{table}[htbp]
\centering
\begin{tabular}{l | c c | c c}
$\theta$ & \multicolumn{2}{c|}{INTC} & \multicolumn{2}{c}{MSFT} \\
 & est. & s.e. & est. & s.e. \\
\hline
$\mu_1$ & 0.0202 & 0.0010 & 0.0278 & 0.0012 \\
$\mu_2$ & 0.0220 & 0.0011 & 0.0329 & 0.0013 \\
$\beta_1$ & 438.89 & 25.2 & 550.47 & 25.1 \\
$\beta_2$ & 679.43 & 39.0 & 840.01 & 41.8 \\
$\alpha_{1s}$ & 0.052 & 5.9 & 2.920 & 2.9 \\
$\alpha_{1c}$ & 80.23 & 8.8 & 178.03 & 12.9 \\
$\alpha_{1n}$ & 3.073 & 3.1 & 0.0001 & 4.0 \\
$\alpha_{1w}$ & 300.55 & 23.1 & 411.05 & 25.4 \\
$\alpha_{2s}$ & 0.034 & 31.0 & 9.912 & 12.1 \\
$\alpha_{2c}$ & 164.77 & 21.4 & 128.67 & 12.6 \\
$\alpha_{2n}$ & 0.001 & 12.4 & 0.007 & 7.3 \\
$\alpha_{2w}$ & 496.58 & 40.3 & 591.95 & 41.1 \\
\hline
$\rho(\bar M)$ & \multicolumn{2}{c|}{0.87--0.92} & \multicolumn{2}{c}{0.91--1.00} \\
\end{tabular}
\caption{Maximum likelihood estimates, full 12-parameter model, INTC and MSFT (2012-06-21, 10:00--15:30). The $\beta_i$ estimates correspond to a memory on the order of milliseconds, consistent with the sub-10ms clustering present in both event series. The standard errors (s.e.) are asymptotic, from the observed information matrix at the fit, computed in coordinates scaled by each parameter family's typical size, with coefficients below $0.5$ moved to $0.5$ so that finite differences stay inside the parameter domain (Appendix~\ref{appendix:B}). For coefficients at or near zero, they indicate only that these are not distinguishable from zero. $\rho(\bar M)$ is reported as a range across the best-fitting parameter sets found during the multi-start search. It is a descriptive measure of the endogeneity of the ungated dominating process (Section~2.3); regime-specific radii are discussed in the text.}
\label{table:realdata-estimates}
\end{table}

The decay rates $\beta_i$ of several hundred per second, and the excitation coefficients of order $10^2$, are of the same order as the estimates that \citet{lee2023modeling} report for IBM in January 2018 ($\beta$ of roughly $540$--$1{,}100$ per second and excitation coefficients of roughly $35$--$530$ on 17 of the 21 trading days reported, with both an order of magnitude smaller on the remaining four), even though their data are consolidated quotes of a high-priced stock. The spectral radius $\rho(\bar M)$ in Table~\ref{table:realdata-estimates} is that of the ungated dominating matrix and overstates the amplification of the gated process. Two regime-specific branching matrices are more informative. At minimum spread, only the rows of $A^u$ and $B^d$ are active, and they receive both $\bar\Phi$ and $\bar\Psi$. At wider spreads, all four rows are active, with $\bar\Psi$ entering only the rows of $A^d$ and $B^u$. At the best fit of each stock, their spectral radii are about $0.71$ (INTC) and $0.73$ (MSFT) at minimum spread and $0.44$ and $0.49$ at wider spreads, compared with $0.92$ and $0.98$ for $\rho(\bar M)$. This follows the state-dependent spectral radii of \citet{morariupatrichi2022state}, and is consistent with the higher endogeneity at one-tick spreads reported by \citet{sfendourakis2020lob}; branching-ratio matrices of order-book flows have also been estimated nonparametrically by \citet{achab2018analysis}.

On a single day, the standard errors in Table~\ref{table:realdata-estimates} are roughly $5$--$6\%$ of the typical scale of the baselines $\mu_i$ and $5$--$8\%$ of that of the decay rates $\beta_i$ (about $25$--$42$ per second), and $9$--$41$ units for the largest excitation coefficients $\alpha_{ic}$ and $\alpha_{iw}$. The timescale and cross-side magnitudes are therefore estimated with moderate precision, while the small coefficients $\alpha_{is}$ and $\alpha_{in}$ cannot be distinguished from zero. The observed information is well conditioned in scaled coordinates (condition numbers of about $486$ for INTC and $449$ for MSFT), and about $18$ trading days would bring every direction of the parameter space to within $10\%$ of its scale. These standard errors assume a correctly specified, stationary model, which the goodness-of-fit diagnostics below call into question.

We test $H_0:\Psi\equiv 0$ (the four flocking parameters $\alpha_{1n},\alpha_{1w},\alpha_{2n},\alpha_{2w}$ all zero, leaving a constrained Hawkes model with self- and cross-excitation only) against the full model, re-estimating the null model with the same multi-start protocol. In both models the baseline intensity is shared between the widening and narrowing event types on each side, so this comparison does not on its own separate the effect of $\Psi$ from that of any other spread-restoring mechanism; Appendix~\ref{appendix:B} reports a stronger test with separate baselines and confirms the rejection survives it. Because the null sets each flocking parameter to a boundary value of its ($\geq 0$) domain, the standard $\chi^2_4$ reference distribution for the likelihood-ratio statistic is not exact when parameters lie on the boundary of their domain~\cite{self1987asymptotic}; we report it on that basis, together with AIC and BIC, which are unaffected by the boundary issue. Because the goodness-of-fit diagnostics below show that neither model is correctly specified, we read the comparison as a measure of relative fit.

\begin{table}[htbp]
\centering
\begin{tabular}{l | c c | c c}
 & \multicolumn{2}{c|}{INTC} & \multicolumn{2}{c}{MSFT} \\
 & Full & $\Psi=0$ & Full & $\Psi=0$ \\
\hline
Log-likelihood & $-572.5$ & $-5269.4$ & $107.5$ & $-6867.3$ \\
AIC & $1169.0$ & $10554.8$ & $-187.4$ & $13750.6$ \\
BIC & $1234.7$ & $10598.6$ & $-117.1$ & $13797.5$ \\
\end{tabular}
\caption{Full model versus the $\Psi=0$ null, both stocks. For both, the likelihood-ratio statistic exceeds $9{,}300$ on 4 degrees of freedom, and both AIC and BIC favor the full model by several thousand points, far beyond what the four additional parameters could produce by chance. The restriction $\Psi=0$ is rejected, and the cross-side term substantially improves fit for both stocks on this trading day.}
\label{table:realdata-tests}
\end{table}

We additionally check the fitted dynamics using time-rescaled residuals~\cite{ogata1988statistical}: for each event type, the compensator $\Lambda_i(t)=\int_0^t\lambda_i(s)\,ds$ evaluated at that type's own event times should have i.i.d.\ Exponential(1) increments under a correctly-specified model. Figure~\ref{fig:qqplot-gof} compares the resulting QQ-plots for INTC, full model versus $\Psi=0$; MSFT shows the same pattern (Figure~\ref{fig:appendixB-msft-qq}, Appendix~\ref{appendix:B}). Kolmogorov-Smirnov tests formally reject Exponential(1) for every event type under both models ($p<0.05$), so neither model is a fully correct description of event timing at this resolution. This is consistent with the single-exponential kernel being too rigid to capture the true multi-timescale clustering of real order book events. \citet{lee2023modeling} report slightly fatter tails than Exponential(1) in the same diagnostic for a single-exponential kernel, and point to multi-kernel specifications. Within that limitation, the two models differ in an interpretable way. Over the estimation window, spread-widening events ($A^u,B^d$) and spread-narrowing events ($A^d,B^u$) must occur equally often (884 each for INTC, 1{,}300 each for MSFT), and for a correctly-specified model the total compensator of each type would match its observed count. Under $\Psi=0$ the model generates only 116 of INTC's 884 narrowing events (142 of 1{,}300 for MSFT), against 408 (589) under the full model, because without $\Psi$ nothing in the model closes the spread quickly after it widens. Correspondingly, the mean rescaled residuals of $A^d$ and $B^u$ fall to $0.08$--$0.18$ under $\Psi=0$, and their Kolmogorov-Smirnov statistics roughly double. Since two-tick spreads last only milliseconds, the role of the flocking kernel is essentially this rapid snap-back, and it is the narrowing event types that carry it. The full model remains imperfect here: it still overstates widening events (1{,}337 versus 884 for INTC) and understates narrowing ones.

\begin{figure}[htbp]
\centering
\includegraphics[width=0.95\textwidth]{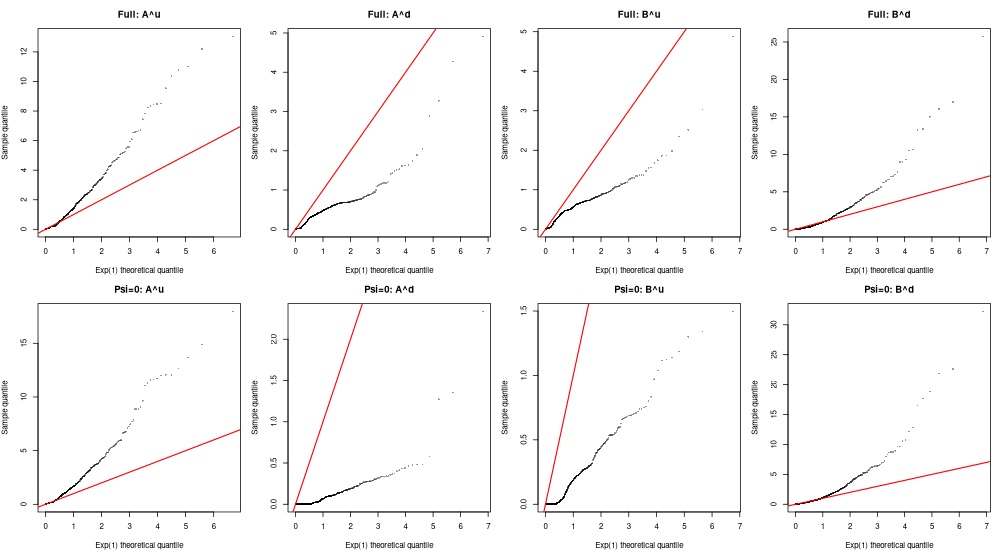}
\caption{Time-rescaled residual QQ-plots against Exponential(1), INTC: full model (top row) versus $\Psi=0$ (bottom row), one panel per event type. The red line is the identity; deviation from it indicates departure from Exponential(1). $A^d$ and $B^u$ compress toward the origin much more severely under $\Psi=0$; $A^u$ and $B^d$ are comparatively stable across both models.}
\label{fig:qqplot-gof}
\end{figure}

\newpage
\section{Cash Flow for Algorithmic Trading}

We now illustrate the model of Section~2 with a single-period trading example. A trader observes the state of the limit order book (the best ask $A(t)$, best bid $B(t)$, and spread $S(t)$), together with the estimated Hawkes-flocking parameters $\theta$, and places a single limit order at time $t$. We evaluate the expected cash-flow impact of that order at the time of the next price movement, $t+\tau$. This single-period formulation isolates the trader's placement decision (price level and quantity) from the sequencing and inventory-management questions that arise when orders are placed repeatedly over a trading horizon; we leave the multi-period extension to future work.

\subsection{Utility Function Incorporating Expected Cash Flow}

Recall the cash position $X(t)$, inventory $Y(t)$, and mark-to-market wealth $G(t):=X(t)+\frac{A(t)+B(t)}{2}Y(t)$ from Table~\ref{tab:state_variables}. For a trading action $\pi(t)$, that is, a choice of order type, price level, and quantity (detailed in Section~\ref{sec:optimal-action} below), define the one-step conditional mean and variance of the resulting change in $G$:
\begin{equation}
\mu_G(t,\pi(t)):=E[G(t+\tau)-G(t)\mid\mathcal{F}_t,\pi(t)], \qquad \sigma^2_G(t,\pi(t)):=Var[G(t+\tau)-G(t)\mid\mathcal{F}_t,\pi(t)].
\end{equation}

We want to maximize the expected gain while controlling its variance, so we introduce a risk-aversion parameter $\eta>0$ and define the utility function $H(t,\pi(t))$ as follows:

\begin{equation}
	H(t,\pi(t))=\mu_G(t,\pi(t))-{\eta}\sigma^2_G(t,\pi(t)),
	\label{def:Ht}
	\end{equation}

and our goal is to maximize $H(t)$.

\subsection{Optimal Action Based on Parameters}\label{sec:optimal-action}

This subsection considers a single time point; the action is chosen once, evaluated at the next price movement $t+\tau$.

At time $t$, the trader may:
\begin{itemize}
	\item sell LO $l^a$ amount at $A(t)$ (best ask), or $A^{+}(t)=A(t)+\delta$ (second best ask), and/or
	\item buy LO $l^b$ amount at $B(t)$ (best bid), or $B^{-}(t)=B(t)-\delta$ (second best bid). 
\end{itemize}

\begin{remark}
These four price levels do not exhaust a trader's options in a real limit order book. An order can also rest deeper than the second-best level, or be placed aggressively inside the spread as a marketable limit order priced to execute immediately. We restrict attention to the best and second-best levels on each side because this section is a single-period illustration of how the model informs a placement decision, not an exhaustive treatment of the trader's action space. Deeper levels and marketable orders are left for future work. (The third-best execution probability $q_3^i$ appears below only because it is needed to characterize what happens to a second-best order if it is bumped one level further; it is not itself offered as a placement choice.) \citet{guo2013optimal} give some support for this restriction: in a different price model (a correlated random walk for the best quotes), their static analysis finds that only the market order and orders at the best and second-best bid matter. We do not claim the same result for the present model. The single-period objective also abstracts from adverse selection, which links an order's fill probability to its profitability~\cite{lehalle2017limit}; in this section the execution probabilities $q_j^i$ are inputs.
\end{remark}

Throughout this section we use the following notation. Let $\lambda_A^u, \lambda_A^d, \lambda_B^u, \lambda_B^d$ denote the conditional intensities from Section~2 evaluated at the current time $t$, abbreviated $\lambda_{A,B}^{u,d}$, and let
\[
\Lambda := \lambda_A^u+\lambda_A^d+\lambda_B^u+\lambda_B^d, \qquad p_{A,B}^{u,d} := \lambda_{A,B}^{u,d}/\Lambda.
\]
Let $q_j^i$ denote the probability that a limit order in the bid ($i=b$) or ask ($i=a$) book, resting at the best ($j=1$), second-best ($j=2$), or third-best ($j=3$) price level, is executed by the time of the next price movement.

\begin{lemma}
When the trader takes action $\pi(t)=(A(t),l^a)$, placing limit sell order at price $A(t)$, the best ask price, of quantity $l^a$, then $\mu_G$ and $\sigma^2_G$ are as follows:

\begin{align*}
\mu_G(t,\pi(t)=(A(t),l^a))
=&l^a \left\{\left( \frac{S(t)}{2}+r \right){(p_A^u +p_A^d q_2^a + (p_B^u + p_B^d )q_1^a) }  +\dfrac{\delta}{2}(p_A^dq_2^a + p_B^d q_1^a -p_A^u - p_B^u q_1^a)\right\}\\
&+\dfrac{\delta}{2} Y(t) (p_A^u + p_B^u - p_B^d -p_A^d) \\
\sigma^2_G(t,\pi(t)=(A(t),l^a))
=&\bigg[(\dfrac{S(t)-\delta}{2}+r)l^a+ \frac{\delta}{2}Y(t)\bigg]^2 p_A^u \\
   +& \bigg[(\dfrac{S(t)+\delta}{2}+r)l^a -  \frac{\delta}{2}Y(t) \bigg]^2 p_A^d q_2^a + (\frac{\delta}{2}Y(t))^2 p_A^d (1-q_2^a)\\
   +&\bigg[(\dfrac{S(t)-\delta}{2}+r)l^a + \frac{\delta}{2}Y(t)  \bigg]^2 p_B^u q_1^a+ (\frac{\delta}{2}Y(t))^2  p_B^u (1-q_1^a)\\
   +&\bigg[(\dfrac{S(t)+\delta}{2}+r)l^a -  \frac{\delta}{2}Y(t) \bigg]^2 p_B^d q_1^a + (\frac{\delta}{2}Y(t))^2  p_B^d (1- q_1^a)
   -\mu_G(t,\pi(t)=(A(t),l^a))^2
\end{align*}

The rest of the cases, when the trader places limit sell order at the second ask price, or when the trader places limit buy order to first/second bid prices, are in Appendix~\ref{appendix:A}.

\label{lem1}
\end{lemma}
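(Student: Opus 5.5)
The plan is to condition on the type of the next price movement at $t+\tau$ and then, within each type, on whether the trader's order executes.

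\textbf{Conditioning on the next event.} The next event is $A^u$, $A^d$, $B^u$ or $B^d$ with probabilities $p_A^u,p_A^d,p_B^u,p_B^d$. These come from the intensities in Equation~(\ref{eq:intensity}) evaluated at $t$, treated as constant over the short interval to the next event. With constant intensities the next event type is a competing-exponentials draw, so its probabilities are $\lambda_{A,B}^{u,d}/\Lambda$. Within each event type, execution occurs with probability $q_j^a$, where $j$ is the level at which the order rests just before the movement. Then
\[
\mu_G=\sum_{e}p_e\bigl[q(e)\,\Delta G_e^{\rm exec}+(1-q(e))\,\Delta G_e^{\rm no}\bigr],
\qquad
\sigma_G^2=\sum_e p_e\bigl[q(e)(\Delta G_e^{\rm exec})^2+(1-q(e))(\Delta G_e^{\rm no})^2\bigr]-\mu_G^2 .
\]

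\textbf{Computing $\Delta G$ in each scenario.} I use $G=X+\tfrac{A+B}{2}Y$.

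\emph{Without execution,} only the midprice changes, by $\pm\delta/2$. This gives $+\tfrac{\delta}{2}Y$ for $A^u,B^u$ and $-\tfrac{\delta}{2}Y$ for $A^d,B^d$. These terms produce $\tfrac{\delta}{2}Y(p_A^u+p_B^u-p_B^d-p_A^d)$, since execution does not alter the inventory-revaluation part.

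\emph{With execution,} cash rises by $(A(t)+r)l^a$ and inventory falls by $l^a$, valued at the new midprice. The contribution is therefore $l^a\bigl(A(t)+r-\mathrm{mid}(t+\tau)\bigr)$. Writing $A(t)-\mathrm{mid}(t)=S(t)/2$, this becomes $l^a(\tfrac{S}{2}+r\mp\tfrac{\delta}{2})$, where the sign follows the direction of the midprice move.

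\textbf{Which probability applies to each event.}
\begin{itemize}
\item $A^u$: the ask rises, so the order at $A(t)$ must have been consumed. I treat this as execution with probability one; that is, the ask moves up only after the best-ask queue, including the order, is exhausted.
\item $A^d$: a new better ask appears, and the order becomes second best, so the probability is $q_2^a$.
\item $B^u$ and $B^d$: the order stays at the best ask, so the probability is $q_1^a$.
\end{itemize}

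\textbf{Assembly.} Collecting the $l^a(\tfrac{S}{2}+r)$ coefficients gives $p_A^u+p_A^dq_2^a+(p_B^u+p_B^d)q_1^a$. The $\pm\delta/2$ parts give $\tfrac{\delta}{2}(p_A^dq_2^a+p_B^dq_1^a-p_A^u-p_B^uq_1^a)$. The squared terms in $\sigma_G^2$ follow the same way: add the execution and revaluation pieces within each scenario, square, weight, and subtract $\mu_G^2$.

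\textbf{Main obstacle.} The modeling conventions need the most care. First, the sign pairing of $\pm\delta/2$ against each event must be fixed consistently. Second, the $A^u$ case must be interpreted as sure execution, and the $A^d$ case as a bump to level two. Third, one must check that the combined term in each squared bracket carries the correct sign relative to $\tfrac{\delta}{2}Y$. I would verify this sign bookkeeping for each of the four events before assembling the formulas.
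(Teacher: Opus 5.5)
Your proposal is correct and follows the paper's proof step for step. Like the paper, you condition on the type of the next price movement with probabilities $p_{A,B}^{u,d}$ and then on execution: sure execution for $A^u$, probability $q_2^a$ for $A^d$, and $q_1^a$ for $B^u,B^d$. You compute $\Delta G$ in each scenario exactly as in Table~\ref{pf:tb1} and take first and second moments.

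One wording fix: you state that $j$ is the level at which the order rests \emph{just before} the movement. That rule would give $q_1^a$ for $A^d$, contradicting your own (correct) assignment of $q_2^a$. The convention you actually use, and the one the paper uses throughout Appendix~\ref{appendix:A}, is the level the order occupies \emph{after} the movement.
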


\begin{proof}
\par In this proof, let us first consider the case of LO sell placement at the price $A(t)$, with quantity $l^a$. The rest of the proof is provided in Appendix~\ref{appendix:A}.

\par At time $t$, the next price movement could be Ask price up/down ($A^u$/$A^d$), and Bid price up/down  ($B^u$/$B^d$), occurring with conditional intensities $\lambda_{A,B}^{u,d}(t)$ as defined above.

\begin{itemize}
\item If next event is $A^u$, this event implies that the trader's order placed at $A(t)$ has been executed. 
\item If next event is $A^d$, there is a chance of execution, or the trader's order has been moved to the second best ask price. Let us denote the chance of execution as $q_2^a$. Note that $q_2^a$ need not be constant; it may depend on $l^a$ and order flow, and can be computed by adapting the queue-depletion-race approach of \citet{figueroalopez2018optimal} to the present Hawkes-flocking intensity framework.
\item If next event is $B^u$ or $B^d$, there is a chance of execution, or the trader's order is still at the best ask. Let us denote the probability of execution of the trader's order in this case as $q_1^a$.
\end{itemize}

Then, after the first price movement ($t+\tau$), 

\begin{table}[htbp]
\begin{tabular}{c|c|c|c|c}
Event & $X(t+\tau)-X(t)$ & $Y(t+\tau)$ & $A(t+\tau)+B(t+\tau)$ & $G(t+\tau)-G(t)$\\
\hline
$A^u$ & $(A(t)+r) l^a$  & $Y(t)-l^a$  & $A(t)+B(t)+\delta$  & $(\dfrac{S(t)-\delta}{2}+r)l^a + \frac{\delta}{2}Y(t)$\\
\hline
$A^d$, $E$ & $(A(t)+r) l^a$ & $Y(t)-l^a$ & $A(t)+B(t)-\delta$ & $(\dfrac{S(t)+\delta}{2}+r)l^a -  \frac{\delta}{2}Y(t)$\\
$A^d$, $E^c$ & 0  & $Y(t)$  & $A(t)+B(t)-\delta$  & $- \frac{\delta}{2}Y(t)$\\
\hline
$B^u$, $E$ & $(A(t)+r) l^a$ & $Y(t)-l^a$ & $A(t)+B(t)+\delta$ &$(\dfrac{S(t)-\delta}{2}+r)l^a  + \frac{\delta}{2}Y(t)$\\
$B^u$, $E^c$ & 0  & $Y(t)$ & $A(t)+B(t)+\delta$ & $+ \frac{\delta}{2}Y(t)$\\
\hline
$B^d$, $E$ & $(A(t)+r) l^a$ & $Y(t)-l^a$ & $A(t)+B(t)-\delta$ & $(\dfrac{S(t)+\delta}{2}+r)l^a -  \frac{\delta}{2}Y(t)$\\
$B^d$, $E^c$ & 0  & $Y(t)$ & $A(t)+B(t)-\delta$ & $-  \frac{\delta}{2}Y(t)$
\end{tabular}
\caption{event E: order executed}\label{pf:tb1}
\end{table}

Using the results from Table \ref{pf:tb1}, we have

\begin{align*}
    G(t+\tau) - G(t)|_{\pi= (A(t), l^a)} =&\bigg[(\dfrac{S(t)-\delta}{2}+r)l^a+ \frac{\delta}{2}Y(t)\bigg]I (A^u) \\
   +& \bigg[(\dfrac{S(t)+\delta}{2}+r)l^a -  \frac{\delta}{2}Y(t) \bigg] I(A^d, E) -  \frac{\delta}{2}Y(t)I(A^d, E^c)\\
   +&\bigg[(\dfrac{S(t)-\delta}{2}+r)l^a + \frac{\delta}{2}Y(t)  \bigg]I(B^u, E)+ \frac{\delta}{2}Y(t) I(B^u, E^c)\\
   +&\bigg[(\dfrac{S(t)+\delta}{2}+r)l^a-  \frac{\delta}{2}Y(t)  \bigg]I(B^d, E)-  \frac{\delta}{2}Y(t) I(B^d, E^c)\\
   =&\bigg[(\dfrac{S(t)-\delta}{2}+r)l^a \bigg](I (A^u) + I(B^u, E))\\
   +& \bigg[(\dfrac{S(t)+\delta}{2}+r)l^a  \bigg]  ( I(A^d, E) + I(B^d, E))\\
   +& \frac{\delta}{2}Y(t) - {\delta}Y(t) ( I (B^d) + I (A^d))
\end{align*}

\begin{align*}
\mu_G(t,\pi(t)=(A(t),l^a))
=&l^a \left\{\left( \frac{S(t)}{2}+r \right){(p_A^u +p_A^d q_2^a + (p_B^u + p_B^d )q_1^a) }  +\dfrac{\delta}{2}(p_A^dq_2^a + p_B^d q_1^a -p_A^u - p_B^u q_1^a)\right\}\\
&+\dfrac{\delta}{2} Y(t) (p_A^u + p_B^u - p_B^d -p_A^d) \\
\sigma^2_G(t,\pi(t)=(A(t),l^a))
=&\bigg[(\dfrac{S(t)-\delta}{2}+r)l^a+ \frac{\delta}{2}Y(t)\bigg]^2 p_A^u \\
   +& \bigg[(\dfrac{S(t)+\delta}{2}+r)l^a -  \frac{\delta}{2}Y(t) \bigg]^2 p_A^d q_2^a + (\frac{\delta}{2}Y(t))^2 p_A^d (1-q_2^a)\\
   +&\bigg[(\dfrac{S(t)-\delta}{2}+r)l^a + \frac{\delta}{2}Y(t)  \bigg]^2 p_B^u q_1^a+ (\frac{\delta}{2}Y(t))^2  p_B^u (1-q_1^a)\\
   +&\bigg[(\dfrac{S(t)+\delta}{2}+r)l^a -  \frac{\delta}{2}Y(t) \bigg]^2 p_B^d q_1^a + (\frac{\delta}{2}Y(t))^2  p_B^d (1- q_1^a)
   -\mu_G(t,\pi(t)=(A(t),l^a))^2
\end{align*}

Rest of the proof is in Appendix~\ref{appendix:A}. 
\end{proof}

Lemma~\ref{lem1} gives $\mu_G(t,\pi)$ and $\sigma^2_G(t,\pi)$ for a resting order at any of the four price levels. Combined with the utility function $H(t,\pi)=\mu_G(t,\pi)-\eta\sigma^2_G(t,\pi)$ of \eqref{def:Ht}, the trader's placement problem reduces to: for a fixed price level, choose the order quantity $l$ that maximizes $H$. Since each outcome $G(t+\tau)-G(t)$ in the proof of Lemma~\ref{lem1} is affine in $l$, $\mu_G(t,\pi)$ is linear in $l$ and $\sigma^2_G(t,\pi)$ is a quadratic form in $l$, so $H(t,\pi)$ is itself quadratic in $l$ for each fixed price level. The following corollary works this out explicitly for $\pi=(A(t),l^a=l)$; the remaining three price levels follow the same argument.

\begin{corollary}

For action $\pi=(A(t), l^a=l)$ (limit order sell at the best ask), $H(t,\pi)$ is quadratic in $l$:
\begin{equation}
H(t,\pi=(A(t), l^a=l)) = c_2 l^2 + c_1 l + c_0, \label{eq:quadratic-H}
\end{equation}
with coefficients $c_2,c_1,c_0$ given below. We find $l$ which maximizes $H(t,\pi)$ and the maximum value as follows:

\begin{align*}
\arg\max_{l} H(t,\pi=(A(t), l^a=l)) 
=&-\frac{c_1}{2c_2},\label{maxl.a}
\\
\max_{l} H(t,\pi=(A(t), l^a=l))
=&-\frac{c_1^2}{4c_2}+c_0,
\end{align*}

\begin{align*}
c_2=&\eta \left\{\left( \frac{S(t)}{2}+r \right){(p_A^u +p_A^d q_2^a + (p_B^u + p_B^d )q_1^a) }  +\dfrac{\delta}{2}(p_A^dq_2^a + p_B^d q_1^a -p_A^u - p_B^u q_1^a)\right\}^2\\
& -\eta \left\{p_A^u(\dfrac{S(t)-\delta}{2}+r)^2 + p_A^d q_2^a(\dfrac{S(t)+\delta}{2}+r)^2 +  p_B^u q_1^a (\dfrac{S(t)-\delta}{2}+r)^2 +p_B^d q_1^a(\dfrac{S(t)+\delta}{2}+r)^2\right\}
\\
c_1 =&\left\{\left( \frac{S(t)}{2}+r \right){(p_A^u +p_A^d q_2^a + (p_B^u + p_B^d )q_1^a) }  +\dfrac{\delta}{2}(p_A^dq_2^a + p_B^d q_1^a -p_A^u - p_B^u q_1^a)\right\} \\
&+ 2\eta \left\{\left( \frac{S(t)}{2}+r \right){(p_A^u +p_A^d q_2^a + (p_B^u + p_B^d )q_1^a) }  +\dfrac{\delta}{2}(p_A^dq_2^a + p_B^d q_1^a -p_A^u - p_B^u q_1^a)\right\}\dfrac{\delta}{2} Y(t) (p_A^u + p_B^u - p_B^d -p_A^d)\\
  -\eta &\left\{p_A^u(\dfrac{S(t)-\delta}{2}+r)\delta Y(t) - p_A^d q_2^a(\dfrac{S(t)+\delta}{2}+r)\delta Y(t) +  p_B^u q_1^a (\dfrac{S(t)-\delta}{2}+r)\delta Y(t) -p_B^d q_1^a(\dfrac{S(t)+\delta}{2}+r)\delta Y(t)\right\}
\\
c_0 =& \dfrac{\delta}{2} Y(t) (p_A^u + p_B^u - p_B^d -p_A^d) + \eta \dfrac{\delta^2}{4} Y(t)^2 (p_A^u + p_B^u - p_B^d -p_A^d)^2 -\eta\dfrac{\delta^2}{4} Y(t)^2 
\end{align*}

\label{cor:optimal-l}
\end{corollary}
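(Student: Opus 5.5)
The plan is a direct expansion of the expressions in Lemma~\ref{lem1}, organised through the random outcome rather than the displayed formulas. From the proof of Lemma~\ref{lem1}, conditional on $\mathcal F_t$ the increment can be written as $G(t+\tau)-G(t)=l\,W+Z$. Here $W$ and $Z$ are discrete random variables that do not depend on $l$. $W$ takes the value $a_-:=\frac{S(t)-\delta}{2}+r$ on the events $A^u$ and $(B^u,E)$, the value $a_+:=\frac{S(t)+\delta}{2}+r$ on $(A^d,E)$ and $(B^d,E)$, and $0$ otherwise. $Z=\pm\frac{\delta}{2}Y(t)$, with sign $+$ on $A^u,B^u$ and $-$ on $A^d,B^d$. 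The event probabilities are $p_A^u$, $p_A^dq_2^a$, $p_A^d(1-q_2^a)$, $p_B^uq_1^a$, and so on, exactly as in Table~\ref{pf:tb1}. Then $\mu_G=l\,E[W]+E[Z]$ and $\sigma^2_G=l^2\operatorname{Var}(W)+2l\operatorname{Cov}(W,Z)+\operatorname{Var}(Z)$, so
\[
H=-\eta\operatorname{Var}(W)\,l^2+\bigl(E[W]-2\eta\operatorname{Cov}(W,Z)\bigr)l+E[Z]-\eta\operatorname{Var}(Z).
\]
This already establishes \eqref{eq:quadratic-H}; what remains is to identify each coefficient with the stated formula.

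\textbf{Step 1: the mean and constant terms.} Compute $K:=E[W]$. Rewriting $a_\pm=(\frac{S(t)}{2}+r)\pm\frac{\delta}{2}$ turns $K$ into the braced expression appearing in $\mu_G$ in Lemma~\ref{lem1}. Next let $D:=p_A^u+p_B^u-p_B^d-p_A^d$. Then $E[Z]=\frac{\delta}{2}Y(t)D$, and since $Z^2=\frac{\delta^2}{4}Y(t)^2$ on every outcome and the probabilities sum to one, $E[Z^2]=\frac{\delta^2}{4}Y(t)^2$. This gives $c_0=\frac{\delta}{2}Y(t)D-\eta\frac{\delta^2}{4}Y(t)^2+\eta\frac{\delta^2}{4}Y(t)^2D^2$, as claimed.

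\textbf{Step 2: the linear and quadratic terms.} Write $\operatorname{Cov}(W,Z)=E[WZ]-K\cdot\frac{\delta}{2}Y(t)D$. Here $2E[WZ]=\delta Y(t)\bigl(p_A^u a_- - p_A^dq_2^a a_+ + p_B^uq_1^a a_- - p_B^dq_1^a a_+\bigr)$, because $Z$ carries sign $+$ on the $u$-events and $-$ on the $d$-events. Substituting gives exactly the three lines of $c_1$; the middle line is $+2\eta K\frac{\delta}{2}Y(t)D$. Similarly $c_2=\eta K^2-\eta E[W^2]$, with $E[W^2]=p_A^ua_-^2+p_A^dq_2^aa_+^2+p_B^uq_1^aa_-^2+p_B^dq_1^aa_+^2$, which matches the stated $c_2$. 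I will check the signs carefully against the expression for $\sigma^2_G$ in Lemma~\ref{lem1} by expanding each squared bracket once.

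\textbf{Step 3: the maximisation.} This is the only step that is not pure bookkeeping, because it requires $c_2<0$. The representation above gives $c_2=-\eta\operatorname{Var}(W)\le 0$ for free. Strict negativity holds whenever $W$ is non-degenerate, that is, whenever execution is neither certain nor impossible at a common value of $W$. I will state this as the standing nondegeneracy condition: it holds, for instance, when $a_\pm>0$, $q_1^a<1$ and $p_B^u+p_B^d>0$. Under it $H$ is strictly concave in $l$. The first-order condition $2c_2l+c_1=0$ gives $l^*=-c_1/(2c_2)$, and substituting back gives $\max H=c_0-c_1^2/(4c_2)$. In the degenerate case $c_2=0$, $H$ is affine and the maximiser is unbounded unless $c_1=0$; I would flag this case rather than treat it.
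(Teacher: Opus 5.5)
Your proposal is correct and takes essentially the same route as the paper: it writes $\mu_G$ as $lK$ plus a constant, collects coefficients from outcomes that are affine in $l$, gets concavity from $c_2=-\eta\,\mathrm{Var}$ of the per-unit payoff (your $W$), and reads off the maximiser from the first-order condition whenever $c_2<0$. One small slip is in your illustrative sufficient condition: $a_\pm>0$, $q_1^a<1$, $p_B^u+p_B^d>0$ does not rule out $W\equiv 0$ (take $q_1^a=0$ and $p_A^u=p_A^d=0$), so you also need some execution outcome to have positive probability, e.g.\ $0<q_1^a<1$.
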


\begin{proof}
By Lemma~\ref{lem1}, $\mu_G(t,\pi)=lK+\frac{\delta}{2}Y(t)(p_A^u+p_B^u-p_B^d-p_A^d)$, where $K$ denotes the coefficient of $l$ given above, and $\sigma^2_G(t,\pi)$ is the variance of a random variable taking the value $\left(\frac{S(t)-\delta}{2}+r\right)l+\frac{\delta}{2}Y(t)$ with probability $p_A^u+p_B^uq_1^a$, the value $\left(\frac{S(t)+\delta}{2}+r\right)l-\frac{\delta}{2}Y(t)$ with probability $p_A^dq_2^a+p_B^dq_1^a$, and values not depending on $l$ on the remaining outcomes. Each value is affine in $l$, so $\mu_G(t,\pi)$ is linear in $l$ and $\sigma^2_G(t,\pi)$ is a quadratic polynomial in $l$; hence $H(t,\pi)=\mu_G(t,\pi)-\eta\sigma^2_G(t,\pi)$ is quadratic in $l$, and collecting the coefficients of $l^2$, $l^1$, $l^0$ gives $c_2$, $c_1$, $c_0$ as stated.

The leading coefficient $c_2$ can be written as $c_2=-\eta\,\mathrm{Var}(Z)$, where $Z$ is the per-unit-quantity payoff equal to $\frac{S(t)-\delta}{2}+r$ with probability $p_A^u+p_B^uq_1^a$ and $\frac{S(t)+\delta}{2}+r$ with probability $p_A^dq_2^a+p_B^dq_1^a$ (and $0$ on the remaining probability mass, which does not affect the variance): $K$ equals $E[Z]$ under these weights, and the bracketed term subtracted from $\eta K^2$ in $c_2$ equals $E[Z^2]$. Since $\eta>0$, this gives $c_2\leq0$, with equality only if $Z$ is degenerate. Thus $H(t,\pi)$ is a concave quadratic in $l$, and whenever $c_2<0$ its unique critical point is a global maximum. Differentiating $c_2l^2+c_1l+c_0$ and setting the result to zero gives $l=-c_1/(2c_2)$; evaluating $H$ at this $l$ gives $c_0-c_1^2/(4c_2)$, establishing both claimed formulas.

All the other cases for Lemma~\ref{lem1} can be computed in a similar way. 
\end{proof}

Figure~\ref{Fig-lem1} illustrates how the optimal placement decision of Corollary~\ref{cor:optimal-l} shifts with the order-flow regime. In the top two panels, the ask side experiences elevated downward pressure or the bid side is simply more active overall; in both cases selling at the best ask remains the dominant choice across the full range of order quantities. This changes in the bottom-left panel, where ask-side orders are both more frequent \emph{and} more likely to execute than bid-side orders: selling at the second-best ask overtakes selling at the best ask, since the higher execution probability at that level outweighs the less favorable price. The bottom-right panel combines a bid-favoring regime with a low risk-aversion parameter ($\eta=0.1$); with the variance penalty nearly negligible, the utility curves are close to linear in $l$, and buying at the second-best bid attains the highest value. Together the four panels show that no single action is uniformly optimal: the ranking depends on the relative order-flow intensities, execution probabilities, and the trader's risk aversion, all of which enter $H(t,\pi)$ through Lemma~\ref{lem1} and Appendix~\ref{appendix:A}.

\begin{figure}[htbp]
\centering
\includegraphics[width=0.48\textwidth]{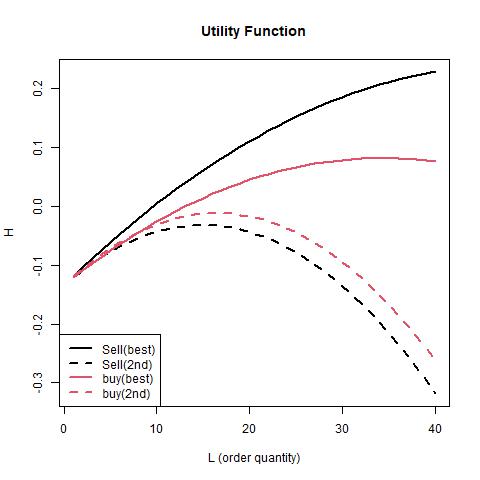}
\includegraphics[width=0.48\textwidth]{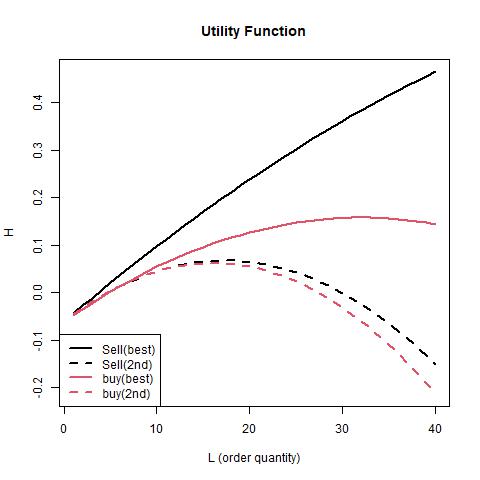}
\includegraphics[width=0.48\textwidth]{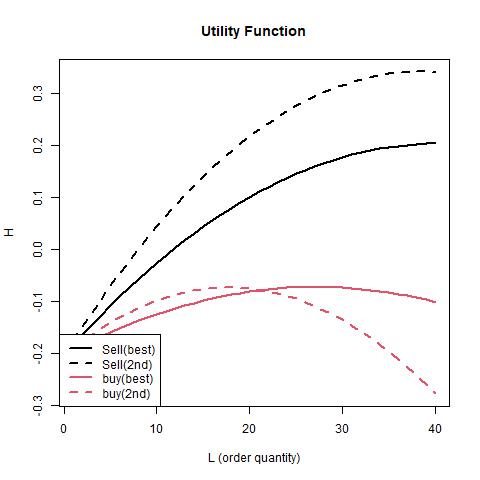}
\includegraphics[width=0.48\textwidth]{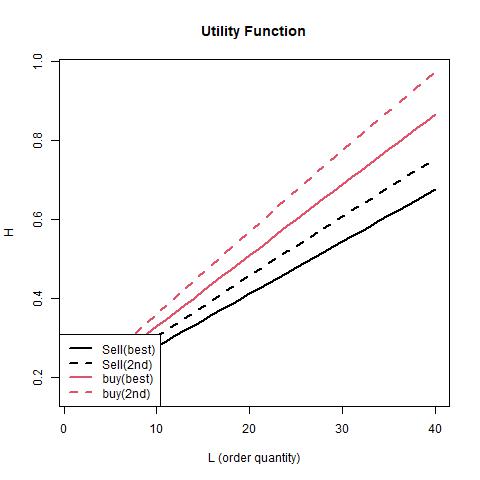}
\caption{The utility function $H(t,\pi)=\mu_G-\eta\sigma_G^2$ as a function of order quantity $l$, for the four resting-order actions of Lemma~\ref{lem1} and Appendix~\ref{appendix:A}: selling at the best or second-best ask, and buying at the best or second-best bid. Each panel uses a different combination of order-flow intensities $\lambda_A^u,\lambda_A^d,\lambda_B^u,\lambda_B^d$, execution probabilities $q_j^i$, and risk-aversion parameter $\eta$.}
\label{Fig-lem1}
\end{figure}

\subsection{Empirical Result}

We illustrate Corollary~\ref{cor:optimal-l}'s closed-form optimal order size using INTC data, in the state that prevails almost all the time: the minimum spread, $S(t)=\delta$, which INTC occupies over $99\%$ of the time (Section~3.3). In this state the spread-narrowing intensities $\lambda_A^d,\lambda_B^u$ are zero by construction, so $p_A^d=p_B^u=0$ and the next price movement is either $A^u$ or $B^d$. Rather than derive $p_A^u$ and $p_B^d$ from the fitted model, whose intensities depend on the recent history of events, we use the observed frequencies: of the 883 events that follow a minimum-spread state in the 10:00--15:30 sample, 402 are $A^u$ and 481 are $B^d$, so $p_A^u\approx0.455$ and $p_B^d\approx0.545$. These are day-averaged frequencies. As a check, we also evaluated $p_A^u(t)$ from the fitted model's own intensities, accumulated from the actual event history, at each of the 883 moments the spread reaches its minimum. The resulting $l^*(t)$ is concentrated close to the static value computed below: $83$ shares at the median, with $86\%$ of evaluations within $5$ shares of that figure, since the fast decay rates ($\beta_1,\beta_2\approx440$--$680$ per second) pull the intensities back toward baseline between the events typically spaced at this state. The remainder, following unusually dense clusters of same-side events, range higher: three moments exceed $200$ shares, and one outlier reaches $7{,}297$. An actual deployment of Corollary~\ref{cor:optimal-l} would capture such moments only by recomputing $l^*$ from live intensities. The mean time until the next price movement from this state is about 22 seconds. The two-tick state, occupied about $1\%$ of the time, is different in kind, since the next movement there is almost surely a narrowing event ($A^d$ in $51\%$ and $B^u$ in $49\%$ of INTC occurrences); we do not illustrate it here. It is also the state in which the cross-side kernel $\Psi$ drives the narrowing intensities (Section~2.2); at the one-tick state illustrated here, $\Psi$ enters only through the widening intensities $\lambda_A^u$ and $\lambda_B^d$, on which the fitted-intensity check above rests.

The execution probabilities $q_1^a,q_2^a$ are not estimated by this model, since doing so would require modeling queue position directly, along the lines of~\citet{figueroalopez2018optimal}. We therefore retain illustrative values, $q_1^a=0.7$ and $q_2^a=0.5$, consistent with Figure~\ref{Fig-lem1} (only $q_1^a$ enters here, since $p_A^d=0$). We take $S(t)=\delta=\$0.01$, $r=\$0.0003$, $\eta=1$, and zero initial inventory ($Y(t)=0$), the last of which sets $c_0=0$ and reduces $c_1$ to $K$ in Corollary~\ref{cor:optimal-l}'s notation.

With these values, $K\approx 0.00406$ and $c_2\approx-2.4\times10^{-5}$, giving
\[
l^*=-\frac{c_1}{2c_2}\approx 85 \text{ shares}, \qquad H(t,\pi)|_{l=l^*}\approx \$0.17.
\]
This is under $1\%$ of the median displayed size at the INTC best ask over the sample (about $11{,}900$ shares), which underlines that the execution probabilities $q_j^i$, illustrative here, are the binding modeling input for realistic order sizes. The small magnitude of $c_2$ means $l^*$ is sensitive to $\eta$: halving $\eta$ doubles $l^*$ (to about 169 shares), since $c_2$ is linear in $\eta$ while $c_1$ does not depend on it.

\section{Conclusion}

In this paper, we have developed a spread-gated Hawkes-flocking model for best bid/ask dynamics, building on the constrained Hawkes framework of {Zheng, Roueff, and Abergel}~\citep{zheng2014modelling} and the flocking mechanism of {Jang, Lee, and Lee}~\citep{jang2020systemic}. We have shown non-explosion via a coupling argument against a dominating Hawkes process (Proposition~\ref{prop:wellposed}), and have given an $O(N)$ recursive maximum likelihood procedure, which we have validated in simulation across three estimation horizons.

Applied to real LOBSTER data for two large-tick stocks, we have found that the cross-side term substantially improves fit: the restriction $\Psi=0$ is rejected on both INTC and MSFT, with AIC and BIC favoring the full model by thousands of points, and goodness-of-fit diagnostics have shown that without $\Psi$ the model reproduces only a small fraction of the observed spread-narrowing events. The full 12-parameter likelihood is multimodal on a single day, so a multi-start search is necessary; at the best fit the observed information is well conditioned in scaled coordinates (Appendix~\ref{appendix:B}).

Several extensions follow naturally. On the empirical side, estimation across more trading days and more stocks, together with out-of-sample comparisons of the full and $\Psi=0$ models, would test the robustness of both findings. On the modeling side, flexible kernels or residual distributions, in the spirit of \citet{lee2025self,lee2026forecasting}, would address the multi-timescale clustering visible in the residual diagnostics. On the application side, direct estimation of the execution probabilities $q_j^i$ via queue-position modeling, along the lines of~\citet{figueroalopez2018optimal}, would let Corollary~\ref{cor:optimal-l}'s illustration use estimated rather than illustrative values, and the single-period placement decision of Section~4 could be extended to a full multi-period trading strategy.

\bibliography{thebib_HawkesFlocking_2026}

\clearpage

\appendix
\section{Appendix: Additional Proof of Lemma~\ref{lem1}}\label{appendix:A}
\begin{align*}
\mu_G(t,\pi(t)=(A^{+}(t),l^a))
=&l^a \left\{\left( \frac{S(t)}{2}+r+\delta \right){(p_A^u q_1^a +p_A^d q_3^a + (p_B^u + p_B^d )q_2^a) }  +\dfrac{\delta}{2}(p_A^dq_3^a + p_B^d q_2^a -p_A^uq_1^a - p_B^u q_2^a)\right\}\\
&+\dfrac{\delta}{2} Y(t) (p_A^u + p_B^u - p_B^d -p_A^d) \\
\sigma^2_G(t,\pi(t)=(A^{+}(t),l^a))
=&\bigg[(\dfrac{S(t)+\delta}{2}+r )l^a+ \frac{\delta}{2}Y(t)\bigg]^2 p_A^u q_1^a  + (\frac{\delta}{2}Y(t))^2 p_A^u (1-q_1^a)\\
   +& \bigg[(\dfrac{S(t)+3\delta}{2}+r)l^a  -  \frac{\delta}{2}Y(t)\bigg]^2 p_A^d q_3^a +  (\frac{\delta}{2}Y(t))^2 p_A^d (1-q_3^a)\\
   +&\bigg[(\dfrac{S(t)+\delta}{2}+r)l^a  + \frac{\delta}{2}Y(t) \bigg]^2 p_B^u q_2^a+ (\frac{\delta}{2}Y(t))^2 p_B^u (1- q_2^a)\\
   +&\bigg[(\dfrac{S(t)+3\delta}{2}+r)l^a -  \frac{\delta}{2}Y(t) \bigg]^2 p_B^d q_2^a+( \frac{\delta}{2}Y(t) )^2   p_B^d (1-q_2^a)-\mu_G(t,\pi(t)=(A^{+}(t),l^a))^2
\end{align*}        

\begin{align*}
\mu_G(t,\pi(t)=(B(t),l^b))
=&l^b \left\{\left( \frac{S(t)}{2}+r \right){(p_A^u q_1^b +p_A^d q_1^b + (p_B^u q_2^b + p_B^d ) )}  +\dfrac{\delta}{2}{(p_A^u q_1^b -p_A^d q_1^b + p_B^uq_2^b - p_B^d ) )} \right\}\\
&+\dfrac{\delta}{2} Y(t) (p_A^u + p_B^u - p_B^d -p_A^d) \\
\sigma^2_G(t,\pi(t)=(B(t),l^b))
&=\bigg[(\dfrac{S(t)+\delta}{2}+r)l^b+ \frac{\delta}{2}Y(t)\bigg]^2(p_A^u q_1^b +( \frac{\delta}{2}Y(t))^2 p_A^u (1-q_1^b)\\
   +& \bigg[(\dfrac{S(t)-\delta}{2}+r)l^b-  \frac{\delta}{2}Y(t) \bigg]^2 p_A^d q_1^b+( \frac{\delta}{2}Y(t))^2p_A^d(1- q_1^b)\\
   +&\bigg[(\dfrac{S(t)+\delta}{2}+r)l^b + \frac{\delta}{2}Y(t) \bigg]^2 p_B^u q_2^b+( \frac{\delta}{2}Y(t))^2 p_B^u(1- q_2^b)\\
   +&\bigg[(\dfrac{S(t)-\delta}{2}+r)l^b -  \frac{\delta}{2}Y(t)\bigg]^2 p_B^d-\mu_G(t,\pi(t)=(B(t),l^b))^2
\end{align*}     

\begin{align*}
\mu_G(t,\pi(t)=(B^{-}(t),l^b))
=&l^b \left\{\left( \frac{S(t)}{2}+r +\delta\right){(p_A^u q_2^b +p_A^d q_2^b + (p_B^u q_3^b + p_B^d q_1^b ) )}  +\dfrac{\delta}{2}{(p_A^u q_2^b -p_A^d q_2^b + p_B^uq_3^b - p_B^dq_1^b ) )} \right\}\\
&+\dfrac{\delta}{2} Y(t) (p_A^u + p_B^u - p_B^d -p_A^d) \\
\sigma^2_G(t,\pi(t)=(B^{-}(t),l^b))
&=\bigg[(\dfrac{S(t)+3\delta}{2}+r)l^b+ \frac{\delta}{2}Y(t)\bigg]^2(p_A^u q_2^b +( \frac{\delta}{2}Y(t))^2 p_A^u (1-q_2^b)\\
   +& \bigg[(\dfrac{S(t)+\delta}{2}+r)l^b-  \frac{\delta}{2}Y(t) \bigg]^2 p_A^d q_2^b+( \frac{\delta}{2}Y(t))^2p_A^d(1- q_2^b)\\
   +&\bigg[(\dfrac{S(t)+3\delta}{2}+r)l^b + \frac{\delta}{2}Y(t) \bigg]^2 p_B^u q_3^b+( \frac{\delta}{2}Y(t))^2 p_B^u(1- q_3^b)\\
   +&\bigg[(\dfrac{S(t)+\delta}{2}+r)l^b -  \frac{\delta}{2}Y(t)\bigg]^2 p_B^d q_1^b +( \frac{\delta}{2}Y(t))^2 p_B^d(1- q_1^b)\\
   &-\mu_G(t,\pi(t)=(B^{-}(t),l^b))^2
\end{align*}     

\begin{proof}

In the main proof, we showed the case in which the trader places a limit sell order at the price $A(t)$, with quantity $l^a$. Let us consider the rest of the cases. 

\noindent \textbf{Order placement (action): $(A^{+}(t), l^a)$}

\begin{itemize}
\item If next event is $A^u$, there is a chance of execution, or the trader's order has been moved to the first best ask price. The chance of execution is $q_1^a$. 
\item If next event is $A^d$, there is a chance of execution, or the trader's order has been moved to the third best ask price. Let us denote the chance of execution as $q_3^a$.
\item If next event is $B^u$ or $B^d$, there is a chance of execution, or the trader's order is still at the second best ask. Let us denote the probability of execution of the trader's order in this case as $q_2^a$.
\end{itemize}

Then, after the first price movement ($t+\tau$), 

\begin{table}[htbp]
\begin{tabular}{c|c|c|c|c}
Event & $X(t+\tau)-X(t)$ & $Y(t+\tau)$ & $A(t+\tau)+B(t+\tau)$ & $G(t+\tau)-G(t)$\\
\hline
$A^u$,  $E$  & $(A(t)+r+\delta) l^a$  & $Y(t)-l^a$  & $A(t)+B(t)+\delta$  & $(\dfrac{S(t)+\delta}{2}+r)l^a + \frac{\delta}{2}Y(t)$\\
$A^u$,  $E^c$  &  0 & $Y(t)$  & $A(t)+B(t)+\delta$  & $\frac{\delta}{2}Y(t)$\\
\hline
$A^d$, $E$ & $(A(t)+r+\delta) l^a$ & $Y(t)-l^a$ & $A(t)+B(t)-\delta$ & $(\dfrac{S(t)+3\delta}{2}+r)l^a -  \frac{\delta}{2}Y(t)$\\
$A^d$, $E^c$ & 0  & $Y(t)$  & $A(t)+B(t)-\delta$  & $- \frac{\delta}{2}Y(t)$\\
\hline
$B^u$, $E$ & $(A(t)+r+\delta) l^a$ & $Y(t)-l^a$ & $A(t)+B(t)+\delta$ &$(\dfrac{S(t)+\delta}{2}+r)l^a  + \frac{\delta}{2}Y(t)$\\
$B^u$, $E^c$ & 0  & $Y(t)$ & $A(t)+B(t)+\delta$ & $+ \frac{\delta}{2}Y(t)$\\
\hline
$B^d$, $E$ & $(A(t)+r+\delta) l^a$ & $Y(t)-l^a$ & $A(t)+B(t)-\delta$ & $(\dfrac{S(t)+3\delta}{2}+r)l^a -  \frac{\delta}{2}Y(t)$\\
$B^d$, $E^c$ & 0  & $Y(t)$ & $A(t)+B(t)-\delta$ & $-  \frac{\delta}{2}Y(t)$
\end{tabular}
\caption{Cash flow and inventory movements after the action $(A^{+}(t), l^a)$, for each events of price movements up/down, and event E (order executed)}\label{pf:tb2}
\end{table}

Using the results from Table \ref{pf:tb2}, we have

\begin{align*}
\mu_G(t,\pi(t)=(A^{+}(t),l^a))
=&l^a \left\{\left( \frac{S(t)}{2}+r+\delta \right){(p_A^u q_1^a +p_A^d q_3^a + (p_B^u + p_B^d )q_2^a) }  +\dfrac{\delta}{2}(p_A^dq_3^a + p_B^d q_2^a -p_A^uq_1^a - p_B^u q_2^a)\right\}\\
&+\dfrac{\delta}{2} Y(t) (p_A^u + p_B^u - p_B^d -p_A^d) \\
\sigma^2_G(t,\pi(t)=(A^{+}(t),l^a))
=&\bigg[(\dfrac{S(t)+\delta}{2}+r )l^a+ \frac{\delta}{2}Y(t)\bigg]^2 p_A^u q_1^a  + (\frac{\delta}{2}Y(t))^2 p_A^u (1-q_1^a)\\
   +& \bigg[(\dfrac{S(t)+3\delta}{2}+r)l^a  -  \frac{\delta}{2}Y(t)\bigg]^2 p_A^d q_3^a +  (\frac{\delta}{2}Y(t))^2 p_A^d (1-q_3^a)\\
   +&\bigg[(\dfrac{S(t)+\delta}{2}+r)l^a  + \frac{\delta}{2}Y(t) \bigg]^2 p_B^u q_2^a+ (\frac{\delta}{2}Y(t))^2 p_B^u (1- q_2^a)\\
   +&\bigg[(\dfrac{S(t)+3\delta}{2}+r)l^a -  \frac{\delta}{2}Y(t) \bigg]^2 p_B^d q_2^a+( \frac{\delta}{2}Y(t) )^2   p_B^d (1-q_2^a)-\mu_G(t,\pi(t)=(A^{+}(t),l^a))^2
\end{align*}        

\noindent \textbf{Order placement (action): $(B(t), l^b)$}

\begin{itemize}
\item If next event is $B^d$, this event implies that the trader's order placed at $B(t)$ has been executed with probability 1. 
\item If next event is $B^u$, there is a chance of execution, or the trader's order has been moved to the second best bid. The chance of execution is $q_2^b$. 
\item If next event is $A^u$ or $A^d$, there is a chance of execution. Let us denote the probability of execution of the trader's order in this case as $q_1^b$.
\end{itemize}

Then, after the first price movement ($t+\tau$), 

\begin{table}[htbp]
\begin{tabular}{c|c|c|c|c}
Event & $X(t+\tau)-X(t)$ & $Y(t+\tau)$ & $A(t+\tau)+B(t+\tau)$ & $G(t+\tau)-G(t)$\\
\hline
$A^u$,  $E$  & $(-B(t)+r) l^b$  & $Y(t)+l^b$  & $A(t)+B(t)+\delta$  & $(\dfrac{S(t)+\delta}{2}+r)l^b + \frac{\delta}{2}Y(t)$\\
$A^u$,  $E^c$  &  0 & $Y(t)$  & $A(t)+B(t)+\delta$  & $\frac{\delta}{2}Y(t)$\\
\hline
$A^d$, $E$ & $(-B(t)+r) l^b$ & $Y(t)+l^b$ & $A(t)+B(t)-\delta$ & $(\dfrac{S(t)-\delta}{2}+r)l^b -  \frac{\delta}{2}Y(t)$\\
$A^d$, $E^c$ & 0  & $Y(t)$  & $A(t)+B(t)-\delta$  & $- \frac{\delta}{2}Y(t)$\\
\hline
$B^u$, $E$ & $(-B(t)+r) l^b$ & $Y(t)+l^b$ & $A(t)+B(t)+\delta$ &$(\dfrac{S(t)+\delta}{2}+r)l^b  + \frac{\delta}{2}Y(t)$\\
$B^u$, $E^c$ & 0  & $Y(t)$ & $A(t)+B(t)+\delta$ & $+ \frac{\delta}{2}Y(t)$\\
\hline
$B^d$ & $(-B(t)+r) l^b$ & $Y(t)+l^b$ & $A(t)+B(t)-\delta$ & $(\dfrac{S(t)-\delta}{2}+r)l^b -  \frac{\delta}{2}Y(t)$
\end{tabular}
\caption{Cash flow and inventory movements after the action $(B(t), l^b)$, for each events of price movements up/down, and event E (order executed)}\label{pf:tb3}
\end{table}

Using the results from Table \ref{pf:tb3}, we have

\begin{align*}
\mu_G(t,\pi(t)=(B(t),l^b))
=&l^b \left\{\left( \frac{S(t)}{2}+r \right){(p_A^u q_1^b +p_A^d q_1^b + (p_B^u q_2^b + p_B^d ) )}  +\dfrac{\delta}{2}{(p_A^u q_1^b -p_A^d q_1^b + p_B^uq_2^b - p_B^d ) )} \right\}\\
&+\dfrac{\delta}{2} Y(t) (p_A^u + p_B^u - p_B^d -p_A^d) \\
\sigma^2_G(t,\pi(t)=(B(t),l^b))
&=\bigg[(\dfrac{S(t)+\delta}{2}+r)l^b+ \frac{\delta}{2}Y(t)\bigg]^2(p_A^u q_1^b +( \frac{\delta}{2}Y(t))^2 p_A^u (1-q_1^b)\\
   +& \bigg[(\dfrac{S(t)-\delta}{2}+r)l^b-  \frac{\delta}{2}Y(t) \bigg]^2 p_A^d q_1^b+( \frac{\delta}{2}Y(t))^2p_A^d(1- q_1^b)\\
   +&\bigg[(\dfrac{S(t)+\delta}{2}+r)l^b + \frac{\delta}{2}Y(t) \bigg]^2 p_B^u q_2^b+( \frac{\delta}{2}Y(t))^2 p_B^u(1- q_2^b)\\
   +&\bigg[(\dfrac{S(t)-\delta}{2}+r)l^b -  \frac{\delta}{2}Y(t)\bigg]^2 p_B^d-\mu_G(t,\pi(t)=(B(t),l^b))^2
\end{align*}     

\noindent \textbf{Order placement (action): $(B^{-}(t), l^b)$}

\begin{itemize}
\item If next event is $B^d$, there is a chance of execution, or the trader's order has been moved to the first best bid price. The chance of execution is $q_1^b$. 
\item If next event is $B^u$, there is a chance of execution, or the trader's order has been moved to the third best bid price. Let us denote the chance of execution as $q_3^b$.
\item If next event is $A^u$ or $A^d$, there is a chance of execution, or the trader's order is still at the second best bid. Let us denote the probability of execution of the trader's order in this case as $q_2^b$.
\end{itemize}

Then, after the first price movement ($t+\tau$), 

\begin{table}[htbp]
\begin{tabular}{c|c|c|c|c}
Event & $X(t+\tau)-X(t)$ & $Y(t+\tau)$ & $A(t+\tau)+B(t+\tau)$ & $G(t+\tau)-G(t)$\\
\hline
$A^u$,  $E$  & $(-B(t)+r+\delta) l^b$  & $Y(t)+l^b$  & $A(t)+B(t)+\delta$  & $(\dfrac{S(t)+3\delta}{2}+r)l^b + \frac{\delta}{2}Y(t)$\\
$A^u$,  $E^c$  &  0 & $Y(t)$  & $A(t)+B(t)+\delta$  & $\frac{\delta}{2}Y(t)$\\
\hline
$A^d$, $E$ & $(-B(t)+r+\delta) l^b$ & $Y(t)+l^b$ & $A(t)+B(t)-\delta$ & $(\dfrac{S(t)+\delta}{2}+r)l^b -  \frac{\delta}{2}Y(t)$\\
$A^d$, $E^c$ & 0  & $Y(t)$  & $A(t)+B(t)-\delta$  & $- \frac{\delta}{2}Y(t)$\\
\hline
$B^u$, $E$ & $(-B(t)+r+\delta) l^b$ & $Y(t)+l^b$ & $A(t)+B(t)+\delta$ &$(\dfrac{S(t)+3\delta}{2}+r)l^b  + \frac{\delta}{2}Y(t)$\\
$B^u$, $E^c$ & 0  & $Y(t)$ & $A(t)+B(t)+\delta$ & $+ \frac{\delta}{2}Y(t)$\\
\hline
$B^d$, $E$ & $(-B(t)+r+\delta) l^b$ & $Y(t)+l^b$ & $A(t)+B(t)-\delta$ & $(\dfrac{S(t)+\delta}{2}+r)l^b -  \frac{\delta}{2}Y(t)$ \\
$B^d$, $E^c$ & 0 & $Y(t)$ & $A(t)+B(t)-\delta$ & $- \frac{\delta}{2}Y(t)$
\end{tabular}
\caption{Cash flow and inventory movements after the action $(B^{-}(t), l^b)$, for each events of price movements up/down, and event E (order executed)}\label{pf:tb4}
\end{table}

Using the results from Table \ref{pf:tb4}, we have

\begin{align*}
\mu_G(t,\pi(t)=(B^{-}(t),l^b))
=&l^b \left\{\left( \frac{S(t)}{2}+r +\delta\right){(p_A^u q_2^b +p_A^d q_2^b + (p_B^u q_3^b + p_B^d q_1^b ) )}  +\dfrac{\delta}{2}{(p_A^u q_2^b -p_A^d q_2^b + p_B^uq_3^b - p_B^dq_1^b ) )} \right\}\\
&+\dfrac{\delta}{2} Y(t) (p_A^u + p_B^u - p_B^d -p_A^d) \\
\sigma^2_G(t,\pi(t)=(B^{-}(t),l^b))
&=\bigg[(\dfrac{S(t)+3\delta}{2}+r)l^b+ \frac{\delta}{2}Y(t)\bigg]^2(p_A^u q_2^b +( \frac{\delta}{2}Y(t))^2 p_A^u (1-q_2^b)\\
   +& \bigg[(\dfrac{S(t)+\delta}{2}+r)l^b-  \frac{\delta}{2}Y(t) \bigg]^2 p_A^d q_2^b+( \frac{\delta}{2}Y(t))^2p_A^d(1- q_2^b)\\
   +&\bigg[(\dfrac{S(t)+3\delta}{2}+r)l^b + \frac{\delta}{2}Y(t) \bigg]^2 p_B^u q_3^b+( \frac{\delta}{2}Y(t))^2 p_B^u(1- q_3^b)\\
   +&\bigg[(\dfrac{S(t)+\delta}{2}+r)l^b -  \frac{\delta}{2}Y(t)\bigg]^2 p_B^d q_1^b +( \frac{\delta}{2}Y(t))^2 p_B^d(1- q_1^b)\\
   &-\mu_G(t,\pi(t)=(B^{-}(t),l^b))^2
\end{align*}     

\end{proof}

\newpage
\section{Appendix: Additional Real-Data Estimation Results}\label{appendix:B}

This appendix documents the estimation diagnostics referenced in Section~3.3: the multi-start search, the conditioning of the observed information at the best fit, three restricted specifications, a stronger $\Psi=0$ test with separate baselines, and MSFT's goodness-of-fit figure.

\subsection*{Multi-start diagnostics}

For each stock, $\theta$ was re-estimated from many independent starting points (log-uniform random initializations, drawn independently per parameter), since a single optimization run was found to be unreliable on this likelihood surface. Table~\ref{table:appendixB-spread} reports the resulting spread in log-likelihood at convergence, and the condition number of the observed information matrix at the best fit found for each stock.

\begin{table}[htbp]
\centering
\begin{tabular}{l | c c}
 & INTC & MSFT \\
\hline
Starting points used & 30 & 24 \\
Best log-likelihood found & $-572.5$ & $107.5$ \\
Log-likelihood range across starts & $769.8$ & $1472.2$ \\
Information matrix condition number (scaled), best fit & $486$ & $449$ \\
\end{tabular}
\caption{Multi-start estimation diagnostics, full 12-parameter model. A log-likelihood range this wide among nominally converged fits indicates a multimodal likelihood surface, so a single optimization run is unreliable. The condition number is that of the observed information matrix in coordinates scaled by each parameter family's typical size ($0.02$ for $\mu_i$, $500$ for $\beta_i$, $100$ for the $\alpha$'s), evaluated with coefficients below $0.5$ moved to $0.5$ so that finite differences stay inside the parameter domain.}
\label{table:appendixB-spread}
\end{table}

To see which directions of the parameter space are least well determined, we computed the eigendecomposition of the observed information matrix in scaled coordinates at each stock's best fit. All eigenvalues are positive, ranging from $5.7$ to $2.8\times10^{3}$ for INTC and from $5.7$ to $2.6\times10^{3}$ for MSFT. In both stocks the least well determined direction is dominated by the bid-side flocking coefficient $\alpha_{2w}$, whose one-day standard error is about $40$ units, or about $8\%$ of the estimate for INTC and $7\%$ for MSFT. With independent days pooled, standard errors shrink with the square root of the number of days, which is the basis for the sample-size figure in Section~3.3.

We also estimated three restricted specifications, each with the same multi-start protocol as the full model, and compared them with the unconstrained fit by a likelihood-ratio test (Table~\ref{table:appendixB-reductions}).

\begin{table}[htbp]
\centering
\small
\begin{tabular}{llc}
\toprule
Constraint tested & Stock & LR test \\
\midrule
Self-excitation $=0$ ($\alpha_{1s},\alpha_{2s}$) & INTC & inconclusive$^*$ \\
Narrow-flocking $=0$ ($\alpha_{1n},\alpha_{2n}$) & MSFT & Rejected, $p<0.0001$ \\
Aliasing ($\alpha_{1c}{=}\alpha_{1w}$, $\alpha_{2c}{=}\alpha_{2w}$) & INTC & Rejected, $p<0.0001$ \\
Aliasing ($\alpha_{1c}{=}\alpha_{1w}$, $\alpha_{2c}{=}\alpha_{2w}$) & MSFT & Rejected, $p<0.0001$ \\
\bottomrule
\end{tabular}
\caption{Three restricted specifications (10 parameters), each re-estimated with the same multi-start protocol as the full model. $^*$Negative LR statistic: the full model's own multi-start search had not found its best fit on this multimodal surface, so no formal comparison was possible.}
\label{table:appendixB-reductions}
\end{table}

The multimodality is a property of the optimization rather than of the local curvature at the best fit, which is why every estimate reported in Section~3.3 comes from a multi-start search using both gradient-based and derivative-free optimization. The point estimates in Table~\ref{table:realdata-estimates}, the likelihood-ratio test for $\Psi=0$ (Table~\ref{table:realdata-tests}), and the goodness-of-fit comparison below all compare fitted likelihoods and do not rely on the individual standard errors.

\subsection*{Separate-baseline $\Psi=0$ test}

The test in Section~3.3 (Table~\ref{table:realdata-tests}) shares one baseline intensity between the widening and narrowing event types on each side of the book ($\mu_1$ for $A^u,A^d$; $\mu_2$ for $B^u,B^d$), so it does not on its own separate the effect of $\Psi$ from that of any other mechanism that could make widening and narrowing rates differ. As a stronger test, we re-estimated both the full model and the $\Psi=0$ null with four separate baselines, one per event type, using the same multi-start protocol (30 starts per stock per model). Table~\ref{table:appendixB-strongernull} reports the result.

\begin{table}[htbp]
\centering
\begin{tabular}{l | c c | c c}
 & \multicolumn{2}{c|}{INTC} & \multicolumn{2}{c}{MSFT} \\
 & 14 param & $\Psi=0$ (10 param) & 14 param & $\Psi=0$ (10 param) \\
\hline
Log-likelihood & $-458.4$ & $-3035.6$ & $256.5$ & $-3286.5$ \\
AIC & $944.9$ & $6091.2$ & $-485.0$ & $6592.9$ \\
BIC & $1021.6$ & $6146.0$ & $-402.9$ & $6651.6$ \\
\hline
LR statistic (df=4) & \multicolumn{2}{c|}{$5{,}154.3$} & \multicolumn{2}{c}{$7{,}085.9$} \\
\end{tabular}
\caption{Stronger $\Psi=0$ test, both stocks: four separate baselines (one per event type) instead of one per side. As in Table~\ref{table:realdata-tests}, the flocking parameters sit at a boundary under $H_0$, so the $\chi^2_4$ reference is not exact. The LR statistics are smaller than in Table~\ref{table:realdata-tests} because some of the widening/narrowing asymmetry is now absorbed by the baselines. The restriction $\Psi=0$ is nonetheless still rejected, and the full model still improves fit substantially, at both stocks.}
\label{table:appendixB-strongernull}
\end{table}

\subsection*{MSFT goodness-of-fit}

Figure~\ref{fig:appendixB-msft-qq} reports the time-rescaled residual QQ-plots for MSFT, full model versus $\Psi=0$, complementing the INTC figure (Figure~\ref{fig:qqplot-gof}) in the main text. The pattern is the same as for INTC: $A^d$ and $B^u$ show markedly worse fit under $\Psi=0$, while $A^u$ and $B^d$ are comparatively stable across both models.

\begin{figure}[htbp]
\centering
\includegraphics[width=0.95\textwidth]{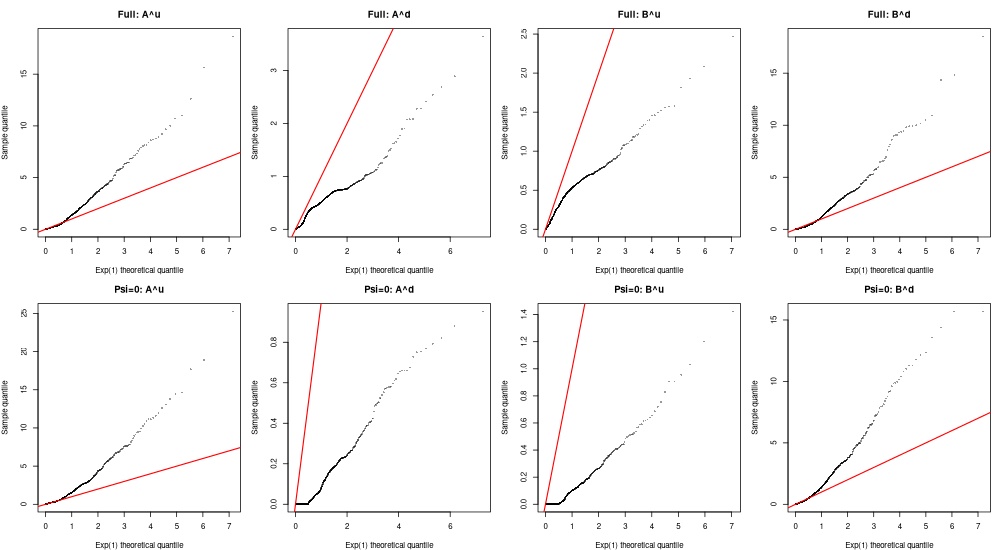}
\caption{Time-rescaled residual QQ-plots against Exponential(1), MSFT: full model (top row) versus $\Psi=0$ (bottom row), one panel per event type. The red line is the identity.}
\label{fig:appendixB-msft-qq}
\end{figure}

\end{document}